\documentclass[10pt]{article}

\usepackage[margin=1.1in]{geometry}
\usepackage{amsmath,amssymb,amsfonts,amsthm}
\usepackage{mathtools}
\usepackage{graphicx}
\usepackage{xcolor}
\usepackage[authoryear, round]{natbib}
\usepackage{booktabs}
\usepackage{array}
\usepackage{enumitem}
\usepackage{bm}
\usepackage{titlesec}
\usepackage{fancyhdr}
\usepackage{microtype}
\usepackage{parskip}
\usepackage{float}
\usepackage{algorithm}
\usepackage[noend]{algpseudocode}

\usepackage{hyperref}
\usepackage[capitalize,nameinlink]{cleveref}
\usepackage{tikz}
\usetikzlibrary{arrows.meta,positioning,calc,fit,backgrounds}

\newtheorem{theorem}{Theorem}
\newtheorem{proposition}{Proposition}
\newtheorem{corollary}{Corollary}

\newtheorem{remark}{Remark}
\newtheorem{assumption}{Assumption}

\crefname{algorithm}{Algorithm}{Algorithms}
\crefname{figure}{Figure}{Figures}
\Crefname{figure}{Figure}{Figures}

\hypersetup{
  colorlinks=true,
  linkcolor=blue!70!black,
  citecolor=blue!50!black,
  urlcolor=blue!60!black
}

\providecommand{\E}{\mathbb{E}}
\providecommand{\Prob}{\mathbb{P}}

\newcommand{\one}{\mathbf 1}

\newcommand{\cG}{\mathcal G}

\newcommand{\cX}{\mathcal X}
\newcommand{\cY}{\mathcal Y}

\newcommand{\Dtr}{\mathcal D_{\mathrm{tr}}}
\newcommand{\Dcal}{\mathcal D_{\mathrm{cal}}}
\newcommand{\Dsrc}{\mathcal D_{\mathrm{src}}}

\newcommand{\defeq}{\coloneqq}
\newcommand{\KL}{D_{\mathrm{KL}}}

\newcommand{\ess}{\mathrm{ESS}}

\newcommand{\cali}{\mathrm{cal}}
\newcommand{\conf}{\mathrm{conf}}

\titleformat{\section}{\large\bfseries\color{blue!60!black}}{{\thesection}}{1em}{}[\titlerule]
\titleformat{\subsection}{\normalsize\bfseries\color{blue!40!black}}{{\thesubsection}}{1em}{}
\titleformat{\subsubsection}{\normalsize\bfseries\color{blue!40!black}}{{\thesubsubsection}}{1em}{}

\begin{document}

\begin{center}
  {\LARGE\bfseries \textsf{Anytime-Valid Confirmation of Covariate Balance \\[5pt]
for Prespecified Corrections}}\\[2em]
  {\large Seungjin Choi}\\[0.5em]
  {\normalsize CROID Research and aSSIST University, Seoul, Korea}
\end{center}

\vspace{0.5em}
\noindent\rule{\linewidth}{1.5pt}
\vspace{0.5em}

\begin{abstract}
Many covariate-shift adaptation methods construct a correction $w(x)$, but users must still determine whether the corrected distribution is sufficiently balanced for the target stream. We study anytime-valid confirmation of prespecified corrections from sequential unlabeled target inputs. Our primary contribution is a procedure for confirming covariate balance. For a prespecified class of balancing functions and tolerances, time-uniform confidence sequences permit continuous monitoring and data-dependent stopping once all plausible target moments lie within their tolerance bands. If the correction is out of tolerance for at least one function, the probability of ever incorrectly confirming balance is at most the prescribed level. Upon stopping, the procedure yields a certificate local to the chosen functions and tolerances, yet providing an absolute downstream-adequacy statement that ordinary shift diagnostics generally do not. With finite source data, contracted bands preserve this guarantee while accounting for uncertainty in weighted source moments, whereas expanded bands support only compatibility diagnostics. As complementary information, we study a source-calibrated likelihood-ratio e-process whose KL-drift identity characterizes correction directions relative to the source. Under the source-reference distribution, the probability of ever crossing its evidence threshold is controlled, but crossing does not confirm balance. We also give an exponential-tilt test for departures beyond an acceptable correction region and deploy balance-confirmed corrections in weighted conformal prediction. Experiments illustrate false-confirmation control, locality to the balancing-function class, KL-drift diagnostics, acceptable-region monitoring, finite-source effects, and downstream conformal coverage under covariate shift.
\end{abstract}


\section{Introduction}
\label{sec:intro}

Prediction and decision systems are often deployed in environments whose inputs
no longer look like the data used for training or calibration.  A model may be
trained on historical customers and deployed on a new market, calibrated on
passive observations and used after an active acquisition policy, or tuned on
simulated inputs and applied to real measurements.  In such settings, the
conditional response mechanism may remain stable while the input distribution
changes.  This is the pure covariate-shift model, in which
\(P_t^X\neq P_s^X\) while \(P_t(Y\mid X)=P_s(Y\mid X)\), where the superscripts
denote input marginals of the source and target distributions.
Covariate-shift methods therefore try to repair the mismatch by
reweighting source inputs.  The ideal target-to-source correction is the
Radon--Nikodym derivative
\begin{equation}
  w^\star(x)=\frac{dP_t^X}{dP_s^X}(x),
  \label{eq:true-density-ratio}
\end{equation}
which, when densities exist, is the familiar covariate density ratio.  This idea
underlies importance-weighted empirical risk minimization
\citep{ShimodairaH2000jspi,CortesC2010neurips}, direct density-ratio estimation
\citep{SugiyamaM2007neurips,KanamoriT2009jmlr}, and covariate-weighted conformal
prediction \citep{TibshiraniR2019neurips}.

Constructing a correction, however, is not the same as knowing that it is safe to
use.  A density-ratio estimate may be noisy, a simulator may be biased, a domain
classifier may exploit irrelevant artifacts, and a correction that worked during
a previous deployment may fail as the target population drifts.  Downstream users
need an inferential answer to a post-construction question: after a candidate
correction has been fixed, is the corrected source distribution sufficiently
credible for the target stream and sufficiently balanced for the intended use?
This question is especially natural when target inputs arrive sequentially and
unlabeled.  One may monitor the stream for days, weeks, or until enough evidence
has accumulated, so a fixed-time diagnostic is not enough.  The evidence must
remain valid under optional stopping.

In this paper we study that post-construction problem.  A practitioner supplies a
nonnegative, prespecified covariate correction \(w\).  We write \(\E_s\),
\(\E_t\), and \(\E_w\) for expectations under the source, target, and
corrected input distributions, respectively.  We view \(w\) as the
Radon--Nikodym derivative of the corrected input distribution with respect to
the source input distribution, inducing
\begin{equation}
  dP_w^X(x)=w(x)\,dP_s^X(x),
  \qquad \E_s \left[ w(X) \right]=1,
  \label{eq:corrected-distribution-intro}
\end{equation}
with finite-source normalization handled separately when \(w\) is estimated from
data.  The correction is the input to our procedure.  The inferential target is
whether the induced corrected input distribution \(P_w^X\) is supported by the
sequential target inputs and balanced enough for downstream use.

There is no single scalar assessment that resolves this question.  The paper's
primary inferential target is local but absolute: on the prespecified functions
that matter for the downstream task, is the remaining mismatch within tolerance?
A complementary question is global but directional: does a source-calibrated
likelihood ratio favor the corrected input distribution over the unweighted
source on the incoming stream?  Neither answer implies the other.  A correction
may improve the overall input distribution while still missing an important
downstream feature.  Conversely, a correction may balance a narrow collection of
features while remaining poor elsewhere.  We therefore distinguish the primary anytime-valid balance-confirmation
procedure from a complementary global monitor.  Both use the same target input
stream, but only successful balance confirmation yields the certificate that
supports the absolute downstream-adequacy claim named in the title.

\begin{itemize}
\item \textbf{Primary anytime-valid covariate balance confirmation.}  Let
\(\mathcal F\) be a prespecified class of balancing functions, and define
\[
  \Delta_{\mathcal F}(w)
  =
  \sup_{f\in\mathcal F}
  \left|\E_t \left[ f(X) \right]-\E_w \left[ f(X) \right]\right|.
\]
Given prespecified tolerances, time-uniform confidence sequences allow continuous
monitoring and stopping as soon as every plausible target balancing-function
mean lies within its tolerance band.  If the proposed correction is out of
tolerance for at least one balancing function, the probability of ever falsely
confirming balance is at most the prescribed level.  Upon stopping, the procedure yields a certificate local to
\(\mathcal F\), with the explicit tolerance-level guarantee needed for
downstream deployment.

\item \textbf{Supporting source-calibrated global monitor.}  The product
\[
  M_n=\prod_{i=1}^n w(X_i)
\]
is an e-process when the source input distribution is used as the calibration
reference.  Crossing its boundary is therefore anytime-valid under a source-like
stream.  Its log-growth has a KL-drift interpretation under a general monitoring
distribution: it grows when the corrected input distribution is closer than the
source in KL risk, and it tends to decay when the correction is globally worse.
This monitor is not restricted to the chosen balancing-function class and can
warn about globally harmful correction directions, but it is only relative.  It
need not reveal an incomplete correction that still improves on the source in KL
risk, does not provide a uniformly level-controlled test of the composite claim
that the correction is no better than the source, and cannot confirm that the residual mismatch is small enough for a specific
downstream use.
\end{itemize}

The two assessments provide nonredundant but non-exhaustive information rather
than forming a pipeline.  The global monitor can warn that a correction is
broadly harmful even when a poorly chosen balancing-function class would pass
it.  Their blind spots can nevertheless overlap: an incomplete correction may
satisfy the selected balancing functions and still improve on the source in KL
risk, producing favorable global drift.  Covariate balance confirmation provides
the explicit downstream adequacy claim that the global monitor alone cannot
provide.  The balancing-function class specifies which moments, regions,
representations, or score-relevant summaries must be balanced, and the tolerance
specifies how much residual mismatch is acceptable.

The comparison with label shift helps explain why the monitored object is the
input distribution.  Under label shift, \(P_s(X\mid Y)=P_t(X\mid Y)\), while the
predictive distribution \(P_t(Y\mid X)\) generally changes; a correction can then
be represented as a predictive tilt, as in \citet{Choi2026testing}.  Under pure
covariate shift, the conditional response mechanism is invariant, so the
correction should act primarily on the input marginal rather than on
\(Y\mid X\).  \Cref{tab:shift-correction-objects} summarizes the distinction.

\begin{table}[H]
\centering
\caption{Shift regimes and corresponding correction objects.  Label shift
naturally leads to label weights or predictive tilting, whereas pure covariate
shift calls for an input-distribution correction.  General joint shift requires
additional structure or robustness assumptions.}
\label{tab:shift-correction-objects}
\renewcommand{\arraystretch}{1.14}
\resizebox{\linewidth}{!}{%
\begin{tabular}{@{}llll@{}}
\toprule
\textbf{Shift type} & \textbf{Invariant assumption} & \textbf{Correction object} &
\textbf{Main repair} \\
\midrule
Label shift &
\(P_s(X\mid Y)=P_t(X\mid Y)\) &
\(w_Y(y)=dP_t^Y/dP_s^Y\) &
Predictive tilt; weighted calibration \\
Covariate shift &
\(P_s(Y\mid X)=P_t(Y\mid X)\) &
\(w_X(x)=dP_t^X/dP_s^X\) &
Weighted calibration; optional weighted fitting \\
Joint shift &
No generic invariant &
Structured \(w(x,y)\) or \(dP_t^{X,Y}/dP_s^{X,Y}\) &
Joint weighting; robustness \\
\bottomrule
\end{tabular}%
}
\end{table}

Because the global monitor and the balance-confirmation procedure are both
input-facing, unlabeled target inputs are enough for monitoring.  Labels may be needed later for training, calibration, or
evaluation, but they are not needed to confirm covariate balance itself.
\Cref{fig:method-schematic} shows the resulting structure: a fixed correction and
a target input stream feed two parallel anytime-valid assessments, which then
inform downstream use.

\begin{figure}[ht!]
\centering
\footnotesize
\begin{tikzpicture}[
  >=Latex,
  fixedbox/.style={
    draw,
    rounded corners=5pt,
    thick,
    align=center,
    text width=3.15cm,
    minimum width=3.45cm,
    minimum height=1.95cm,
    inner sep=4pt
  },
  arr/.style={->,thick}
]

\node[fixedbox] (input) at (0,0) {
  \textbf{Inputs}\\[2pt]
  source data \(\mathcal D_{\mathrm{src}}\), fixed correction \(w(x)\),
  and target inputs \(X_1,X_2,\ldots\)
};

\node[fixedbox] (relative) at (4.3,1.55) {
  \textbf{Supporting}\\
  \textbf{global monitor}\\[3pt]
  \(M_n=\prod_{i=1}^n w(X_i)\)\\[1pt]
  global, directional
};

\node[fixedbox] (balance) at (4.3,-1.55) {
  \textbf{Primary balance}\\
  \textbf{confirmation}\\[3pt]
  \(\Delta_{\mathcal F}(w)\le \varepsilon\)\\[1pt]
  \(\Prob\left\{\tau_{\mathrm{bal}}<\infty\right\}\le\delta\)
  under imbalance
};

\node[
  fixedbox,
  text width=3.65cm,
  minimum width=4.05cm
] (finite) at (8.6,1.55) {
  \textbf{Finite-source uncertainty}\\[2pt]
  normalizer uncertainty affects the global monitor; weighted source-moment
  uncertainty contracts balance bands
};

\node[fixedbox] (down) at (8.6,-1.55) {
  \textbf{Downstream task}\\[2pt]
  weighted conformal calibration under pure covariate shift
};

\draw[arr] (input) -- (relative);
\draw[arr] (input) -- (balance);
\draw[arr] (balance) -- (down);

\draw[arr,dashed] (finite.west) -- (relative.east);

\draw[arr,dashed,rounded corners=4pt]
  (finite.south)
  -- ++(0,-0.25)
  -| (balance.north);

\end{tikzpicture}

\caption{
The primary anytime-valid balance-confirmation procedure and a supporting global
monitor, run in parallel on the target input stream. Covariate balance
confirmation controls the probability of ever falsely confirming an
out-of-tolerance correction and is the gate for downstream use. The global
e-process provides source-calibrated relative evidence and a KL-drift diagnostic
without restricting attention to the selected balancing functions, but it does
not itself confirm balance with the target.
}
\label{fig:method-schematic}
\end{figure}

The main contributions are as follows.
\begin{itemize}
\item Our primary result is an anytime-valid procedure for confirming covariate balance for a
prespecified correction.  For a fixed balancing-function class and tolerances,
the procedure may be monitored continuously and stopped at any data-dependent
time; if the correction is out of tolerance for at least one balancing function,
the probability of ever falsely confirming balance is at most \(\delta\).

\item We show that this optional-stopping guarantee follows from simultaneous
time-uniform coverage of the target balancing-function means.  The confirmation guarantee therefore need not be represented by an e-process.  We provide an explicit
Hoeffding construction for bounded functions, a sub-Gaussian construction for
the Gaussian experiments, and references to variance-adaptive alternatives.

\item We separate the absolute, balancing-function-local confirmation procedure from a
source-calibrated global e-process.  The global process is distribution-wide and
has a KL-drift interpretation, but it provides relative evidence rather than a
level-controlled confirmation of balance with the target.

\item We analyze finite-source uncertainty in normalization and weighted source
moments.  Formal confirmation requires contracted confirmation bands, yielding a
\(\delta+\eta\) false-confirmation bound when source moments are estimated,
whereas expanded compatibility bands support only diagnostic compatibility
claims.

\item We develop two operational uses of the framework: anytime-valid testing
of whether a target stream exceeds a prespecified acceptable exponential-tilt
region, and deployment of a balance-confirmed correction in weighted conformal
prediction.  Experiments evaluate these uses together with the core global and
balance procedures and finite-source effects.
\end{itemize}

\section{Related Work}
\label{sec:related}

\textbf{Conformal prediction under covariate shift.}
Conformal prediction gives finite-sample predictive coverage under exchangeability
\citep{VovkV2005book,PapadopoulosH2002ecml,ShaferG2008jmlr,BarberRF2023aos,AngelopoulosAN2023ftml}.  
When the deployment input marginal differs from the calibration input marginal, exchangeability is broken.
Weighted conformal prediction repairs this mismatch by replacing the ordinary
calibration distribution with the target-weighted one; with the correct
covariate density ratio, target coverage can be recovered
\citep{TibshiraniR2019neurips}.  Our paper is not a new weighted conformal prediction method.  It addresses a
question that weighted conformal procedures usually take for granted: before
using a proposed correction in calibration, can we sequentially confirm that the
induced covariate balance is adequate for the score distribution or other
features relevant to downstream validity?

\textbf{Covariate shift and density-ratio estimation.}
A large literature treats covariate-shift correction as an estimation problem.
The target-to-source ratio \(w^\star(x)=dP_t^X/dP_s^X(x)\) is estimated from
source and target inputs and then used for importance-weighted risk minimization,
weighted calibration, or diagnostics.  Direct density-ratio methods estimate the
ratio without separately estimating the two input densities, including KLIEP,
least-squares importance fitting, and related classifier-odds approaches
\citep{SugiyamaM2007neurips,KanamoriT2009jmlr,SugiyamaM2012book}.  This paper is
complementary.  We assume a proposed correction \(w\) has already been fixed
relative to the monitoring stream, possibly from older data, a simulator, a
known policy change, or a separate domain classifier.  The goal is not to learn \(w^\star\) from the monitored inputs.  The proposed
correction is treated as fixed, and the inferential target is the balance it
induces, assessed in two distinct senses: globally relative to the
unweighted source, and, in the tolerance sense that matters downstream, relative
to the target within a prespecified balancing-function class.

\textbf{Covariate balancing weights.}
Existing covariate-balancing methods focus primarily on constructing weights
that satisfy balance constraints on a finite sample.  Entropy balancing
\citep{HainmuellerJ2012pa}, the covariate balancing propensity score
\citep{ImaiK2014jrsssb}, stable balancing weights \citep{ZubizarretaJR2015jasa},
empirical balancing calibration weighting \citep{ChanKCG2016jrsssb}, and
approximate residual balancing \citep{AtheyS2018jrsssb} choose weights so that
prescribed balancing functions match across treatment groups or between a
weighted sample and a target population; see \citet{Ben-MichaelE2021arxiv} for
a review.  Our objective is different.  We assume that a proposed correction
\(w\) has already been constructed and fixed before monitoring, and develop
anytime-valid sequential procedures for confirming whether the corrected source
distribution achieves a prescribed balance criterion on an incoming target
stream.  Thus, the causal-inference literature primarily addresses a
\emph{construction problem}, whereas our work addresses a
\emph{confirmation problem}.  The two roles are complementary: a
covariate-balancing procedure may supply \(w\), after which our method can assess
over time whether the resulting corrected distribution remains adequately
balanced for downstream use.

\textbf{Two-sample testing, integral probability metrics, and equivalence.}
The balancing discrepancy
\(\Delta_{\mathcal F}(w)=\sup_{f\in\mathcal F}|\E_t \left[ f(X) \right]-\E_w \left[ f(X) \right]|\)
is an integral probability metric \citep{MullerA1997aap,SriperumbudurBK2012ejs},
of which the kernel maximum mean discrepancy is the reproducing-kernel special
case \citep{GrettonA2012jmlr}.  Classical two-sample tests, including sequential
ones, ask whether such a discrepancy is nonzero; recent anytime-valid two-sample
tests do this by betting on witness functions drawn from the variational form of
an integral probability metric \citep{ShekharS2024ieeetit}, with related
martingale methods for sequentially estimating divergences
\citep{ManoleT2023ieeetit}.  Our inferential direction is reversed.  Rather than detect a difference, we seek anytime-valid confirmation that the residual
discrepancy is small enough,
\[
  \sup_{f\in\mathcal F}
  \left|\E_t \left[ f(X) \right]-\E_w \left[ f(X) \right]\right|
  \le \varepsilon ,
\]
an equivalence-style statement in the spirit of the two one-sided tests framework
\citep{SchuirmannDJ1987jpb,WellekS2010book}, implemented sequentially with
time-uniform confidence sequences.  The balancing-function class \(\mathcal F\) makes the resulting certificate
deliberately local to the chosen notion of balance: raw covariates,
model features, subpopulation indicators, kernel landmarks, or other prespecified
diagnostics.

\textbf{E-values, anytime-valid inference, and confidence sequences.}
Nonnegative supermartingales and e-values provide optional-stopping guarantees
through Ville's inequality \citep{VilleJ1939phd}.  Sequential likelihood ratios
connect the same idea to the sequential probability ratio test
\citep{WaldA1945aoms}.  The modern safe-testing and game-theoretic statistics
literature develops e-values and betting interpretations of sequential evidence
\citep{GrunwaldP2024jrsssb,RamdasA2023ss}, while confidence sequences provide
practical time-uniform inference for means and bounded observations
\citep{HowardSR2021aos,Waudby-SmithI2024jrsssb}.  We use these tools in two
complementary roles.  The input e-process gives a global, directional monitor
for the proposed ratio, while the confidence-sequence construction gives a local,
tolerance-level route to covariate balance confirmation.

\textbf{Relation to label-shift confirmation.}
The closest conceptual predecessor is anytime-valid confirmation of label-shift
corrections \citep{Choi2026testing}.  In that setting, the correction tilts the
predictive distribution \(p_s(y\mid x)\), and evidence is accumulated from
labeled target pairs.  Under pure covariate shift, the conditional predictive
distribution is invariant, so the correction acts on the input marginal instead.
This changes both the data stream and the meaning of confirmation: target inputs alone can generate source-calibrated relative evidence, but covariate balance confirmation is needed to establish that the corrected input distribution is close enough for downstream use.

\section{Problem Setup}
\label{sec:setup}

Let \(P_s\) and \(P_t\) denote source and target joint distributions on
\(\cX\times\cY\).  Their input marginals are \(P_s^X\) and \(P_t^X\).  The pure
covariate-shift model assumes
\begin{equation}
  P_t^X\neq P_s^X,
  \qquad
  P_t(Y\mid X=x)=P_s(Y\mid X=x),
  \label{eq:cov-shift-setup}
\end{equation}
so the shift is in the distribution of inputs, not in the conditional response
mechanism.  Source labeled data are available,
\[
  \Dsrc=\{(X_i^s,Y_i^s)\}_{i=1}^{n_s},
  \qquad
  (X_i^s,Y_i^s)\sim P_s,
\]
and, for the main development, target inputs arrive sequentially as
\[
  X_1,X_2,\ldots \stackrel{\mathrm{iid}}{\sim}P_t^X.
\]
Throughout, \(X_i\) denotes the \(i\)th target input.  We reserve \(t\) for the
target distribution and use \(n\) for the current monitoring index.  Target
responses are not needed to confirm covariate balance.  They may be used later
for downstream training, calibration, or evaluation, but the confirmation
procedures in this paper are driven by the target input stream.  The abstract
confidence-sequence result extends beyond iid sampling whenever a valid
time-uniform confidence sequence is available for a fixed target parameter under
the actual sequential distribution.

A practitioner proposes a nonnegative covariate correction
\[
  w:\cX\to[0,\infty).
\]
The correction is prespecified relative to the monitoring stream.  It may have
been obtained from source data, older target inputs, a held-out source-target
classifier, a simulator, a known sampling-policy change, a transport map, or
domain knowledge.  For the main results, all information used to construct
\(w\) is treated as part of the initial sigma-field.  Conditional on this
information, \(w\) is fixed before the monitored inputs
\(X_1,X_2,\ldots\) are used to form the evidence process.  The paper focuses
on this prespecified-correction setting.  Online updating is possible in
principle, but it would require a separate predictable construction: the weight
\(w_i\) used at monitoring index \(i\) must be chosen from the past, before observing
\(X_i\), and must remain normalized under the source input distribution.  We
do not develop this adaptive extension here.

Recall that \(\E_s\), \(\E_t\), and \(\E_w\) denote expectations under
\(P_s^X\), \(P_t^X\), and the corrected input distribution \(P_w^X\),
respectively.  Let
\(\cG_0\) contain all information used before monitoring, including the source
data, the construction and normalization of \(w\), and any source-moment
intervals.  The monitoring filtration is
\[
  \cG_n
  =
  \cG_0\vee\sigma(X_1,\ldots,X_n),
  \qquad n\ge 0.
\]
Thus every quantity constructed before the target stream is observed is
\(\cG_0\)-measurable.

\subsection{Corrected input distribution}
\label{subsec:corrected-input-distribution}

For any proposed correction satisfying
\[
  0<Z_s(w)\defeq \E_s \left[ w(X) \right]<\infty,
\]
define the normalized correction
\[
 \overline{w}(x)=\frac{w(x)}{Z_s(w)}
\]
and the corrected input distribution
\begin{equation}
  P_w^X(A)
  =
  \E_s\!\left[\overline w(X)\one\{X\in A\}\right]
  =
  \frac{\E_s \left[w(X) \, \one\{X\in A\} \right]}{\E_s \left[w(X)\right]}.
  \label{eq:corrected-input-distribution}
\end{equation}
Thus \(dP_w^X/dP_s^X=\overline w\).  If the proposed correction is already normalized,
\(\E_s \left[ w(X) \right]=1\), then \(\overline w=w\).  
We assume this normalized case to keep notation simple for the main development.
Finite-source normalization and conservative normalizers are treated in \cref{sec:finite-source}.

When the target-to-source density ratio exists,
\[
  w^\star(x)=\frac{dP_t^X}{dP_s^X}(x),
\]
and \(w=w^\star\), the corrected input distribution satisfies \(P_w^X=P_t^X\).
For an approximate correction, \(P_w^X\) is the target input distribution implied by the proposed weights.  
The goal of this paper is to confirm, sequentially and with optional-stopping validity, whether that implied input distribution achieves a prespecified level of covariate balance with the target.

\subsection{Primary confirmation procedure and supporting monitor}
\label{subsec:two-tools}

The framework separates a global directional comparison from the
downstream-facing balance target.  The first question is, ``Does a likelihood
ratio calibrated under the source input distribution favor the reweighted source
distribution \(P_w^X\) over \(P_s^X\) on the incoming stream?''  This is organized
around the source-reference calibration null
\begin{equation}
  H_0^{\mathrm{src}}:\quad
  X_i\mid \cG_{i-1}\sim P_s^X,
  \quad i=1,2,\ldots.
  \label{eq:source-input-null}
\end{equation}
Under the pure covariate-shift premise \eqref{eq:cov-shift-setup} we already expect
\(P_t^X\neq P_s^X\), so \(H_0^{\mathrm{src}}\) is not a hypothesis we believe.  It
is a calibration device: crossing the input e-process of \cref{sec:eprocess}
controls the crossing probability when the monitored stream is source-like.  The
process drift under a general monitoring distribution is positive when
\(P_w^X\) is closer than \(P_s^X\) in KL risk and negative when it is farther away
(\cref{prop:kl-drift}).  This drift interpretation is scientifically useful, but
the source-calibrated crossing guarantee is not a level-controlled test of the
composite null that the correction is no better than the source.  The global
monitor therefore supplies relative evidence and diagnostics; it does not by
itself establish covariate balance with the target.

The second question is absolute up to a chosen tolerance.  Given a balancing-function class
\(\mathcal F\) and tolerance \(\varepsilon\ge 0\), define the residual balancing
imbalance
\begin{equation}
  \Delta_{\mathcal F}(w)
  =
  \sup_{f\in\mathcal F}
  \big|
  \E_t \left[ f(X) \right]-\E_w \left[ f(X) \right]
  \big|,
  \qquad
  \E_w \left[ f(X) \right]
  =
  \frac{\E_s  \left[ w(X)f(X) \right]}{\E_s \left[ w(X)\right]}.
  \label{eq:ipm-discrepancy}
\end{equation}
The associated beyond-tolerance imbalance null is
\begin{equation}
  H_0^{\mathrm{bad}}(\varepsilon):
  \quad
  \Delta_{\mathcal F}(w)>\varepsilon.
  \label{eq:bad-null}
\end{equation}
Rejecting \eqref{eq:bad-null} gives covariate balance confirmation: the
remaining discrepancy between \(P_w^X\) and \(P_t^X\), as measured by the chosen
balancing-function class, is within the prespecified tolerance.  The balancing-function class should be
chosen before monitoring begins and should reflect the covariate features or
moments that matter for the intended downstream use.  The confidence-sequence
constructions in \cref{sec:balance} impose the corresponding boundedness or
sub-Gaussian conditions on these balancing functions.
The null in \eqref{eq:bad-null} uses a strict inequality.  Hence the boundary
case \(\Delta_{\mathcal F}(w)=\varepsilon\) is not counted as a false confirmation
of \(\varepsilon\)-balance.  A small margin can be introduced if closed-null
boundary control is desired.

The two assessments have different targets and neither implies the other.
The source-calibrated global monitor compares \(P_w^X\) with \(P_s^X\) through
\(\log w\).  It is not restricted to the balancing-function class, but it
returns only a relative direction of improvement and can favor an incomplete
correction that remains far from \(P_t^X\).  Covariate balance confirmation
compares \(P_w^X\) with \(P_t^X\) through the balancing-function class
\(\mathcal F\); it is a within-tolerance equivalence question that returns a
quantitative guarantee only on the chosen balancing functions.  The two
assessments are therefore run in parallel and reported together because they
provide different information, not because either eliminates all blind spots of
the other.

In this paper we present the supporting global monitor first because its simple
likelihood-ratio construction makes the limitations of relative evidence
transparent.  Section~\ref{sec:eprocess} develops this monitor and its KL-drift
interpretation.  Section~\ref{sec:balance} then develops the paper's primary
result, an anytime-valid procedure for confirming covariate balance.  Thus the order of
presentation is pedagogical rather than a ranking of the contributions.

\section{Source-Calibrated Global Relative Evidence from an Input E-Process}
\label{sec:eprocess}

We begin with the simpler supporting global monitor because it exposes what
source-calibrated relative evidence can and cannot establish.  The process
compares the corrected input distribution \(P_w^X\) with the source reference
\(P_s^X\) over the full input space.  Its boundary-crossing guarantee is calibrated
under the source-reference law.  Under the actual monitoring distribution, its
log-growth provides a diagnostic of whether the proposed correction points in a
globally favorable or harmful KL direction relative to leaving the source
unweighted.  This monitor is deliberately coarse and directional.  In particular,
it does not provide the absolute balance claim required for downstream use.  Its
limitations motivate the primary anytime-valid balance-confirmation procedure developed in
\cref{sec:balance}.

\bigskip
\begin{assumption}[Prespecified normalized correction]
\label{ass:prespecified}
The function \(w:\cX\to[0,\infty)\) is measurable and
\(\cG_0\)-measurable, is fixed before the confirmation stream is used, and
satisfies
\[
  \E_s \left[ w(X) \right]=1
\]
almost surely with respect to any pre-monitoring randomness in \(\cG_0\).
Equivalently,
\[
  dP_w^X(x)=w(x)\,dP_s^X(x),
\]
so \(w=dP_w^X/dP_s^X\) is the Radon--Nikodym derivative of the corrected input
distribution with respect to the source input distribution.
\end{assumption}

\bigskip
\begin{theorem}[Input likelihood-ratio e-process]
\label{thm:input-eprocess}
Under \cref{ass:prespecified} and the source-reference calibration null
\eqref{eq:source-input-null},
\[
  e_i=w(X_i)
\]
is a conditional e-value:
\[
  \E_{H_0^{\mathrm{src}}}\!\left[e_i\mid \cG_{i-1}\right]=1.
\]
Consequently,
\begin{equation}
  M_n=\prod_{i=1}^n w(X_i),
  \qquad M_0=1,
  \label{eq:input-martingale}
\end{equation}
is a nonnegative martingale under the source-reference calibration null.  For any
\(\alpha\in(0,1)\),
\begin{equation}
  \Prob_{H_0^{\mathrm{src}}}
  \left\{
  \sup_{n\ge 0}M_n\ge \frac{1}{\alpha}
  \right\}
  \le \alpha.
  \label{eq:ville-input}
\end{equation}
\end{theorem}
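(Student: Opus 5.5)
The proof has three steps: compute the conditional expectation of each factor, deduce the martingale property, and apply Ville's inequality.

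\emph{Conditional e-value.} Since $w$ is $\cG_0$-measurable, it is $\cG_{i-1}$-measurable for every $i\ge1$. Under $H_0^{\mathrm{src}}$ the conditional law of $X_i$ given $\cG_{i-1}$ is $P_s^X$. Because $w$ is fixed given $\cG_{i-1}$, a regular conditional distribution argument (or the freezing/substitution lemma for a $\cG_{i-1}$-measurable function composed with $X_i$) gives
\[
\E_{H_0^{\mathrm{src}}}\!\left[w(X_i)\mid\cG_{i-1}\right]=\int w(x)\,dP_s^X(x)=\E_s\left[w(X)\right]=1 .
\]
The last equality holds almost surely by \cref{ass:prespecified}. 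Nonnegativity of $w$ makes the integral well defined even before integrability is known. The same computation shows $\E[e_i]=1$, so $e_i$ is integrable.

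\emph{Martingale property.} $M_n$ is $\cG_n$-measurable and nonnegative. By induction,
\[
\E[M_n] = \E\!\left[M_{n-1}\,\E[w(X_n)\mid\cG_{n-1}]\right] = \E[M_{n-1}] = 1 ,
\]
where the middle step pulls out what is known; this is legitimate for nonnegative variables even before integrability is established. Hence $M_n$ is integrable, and the same computation gives $\E[M_n\mid\cG_{n-1}]=M_{n-1}\E[w(X_n)\mid\cG_{n-1}]=M_{n-1}$. So $(M_n)$ is a nonnegative martingale with $M_0=1$.

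\emph{Ville's inequality.} Apply Ville's inequality for nonnegative supermartingales: $\Prob\{\sup_n M_n\ge c\}\le \E[M_0]/c$. With $c=1/\alpha$ this yields \eqref{eq:ville-input}. For a self-contained argument, set $\tau=\inf\{n: M_n\ge1/\alpha\}$. Optional stopping at the bounded time $\tau\wedge N$ gives $\E[M_{\tau\wedge N}]=1$, hence $\Prob\{\tau\le N\}\le\alpha$. Letting $N\to\infty$ by monotone convergence gives $\Prob\{\tau<\infty\}\le\alpha$.

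The only step needing care is the first one: conditioning on $\cG_0$, which contains the source data used to build $w$. The normalization $\E_s[w(X)]=1$ must hold almost surely over that pre-monitoring randomness, which is exactly what \cref{ass:prespecified} guarantees. The null \eqref{eq:source-input-null} is stated conditionally on $\cG_{i-1}$ precisely so that this substitution is valid. Everything else is standard.
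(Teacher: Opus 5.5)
Your proof is correct and follows the same route as the paper. Both show that \(w(X_i)\) has conditional mean one under \(H_0^{\mathrm{src}}\), using the almost-sure normalization in \cref{ass:prespecified}, conclude that \(M_n\) is a nonnegative martingale, and apply Ville's inequality. The only difference is cosmetic: the paper fixes a realization of \(\cG_0\), applies Ville conditionally and then averages over \(\cG_0\), whereas you work directly with the filtration \((\cG_n)\supseteq\cG_0\) and spell out the freezing-lemma and integrability details that the paper leaves implicit.
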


\noindent\emph{Proof sketch.}
Under the source-reference calibration null, the increment has conditional mean
one after conditioning on the past, so the product is a nonnegative martingale.
Ville's inequality gives the time-uniform crossing bound.  The full proof is in
Appendix~\ref{app:proof-input-eprocess}.

The stopping time
\[
  \tau_{\conf}
  =
  \inf\left\{n\ge 1:M_n\ge \frac{1}{\alpha}\right\}
\]
is an anytime-valid source-calibrated crossing rule.  Its level guarantee is with
respect to the source-reference calibration null.

The log process is
\[
  \log M_n
  =
  \sum_{i=1}^n \log\frac{dP_w^X}{dP_s^X}(X_i).
\]
It is the cumulative log-likelihood advantage of the corrected input
distribution over the source input distribution on the observed stream.
Therefore the e-process is not merely a test statistic.  It is an online
accounting device whose average log-growth can be interpreted through the KL
difference in \cref{prop:kl-drift}.

\bigskip
\begin{proposition}[KL drift identity]
\label{prop:kl-drift}
Let \(Q^X\) be any input distribution satisfying
\[
  Q^X\ll P_w^X\ll P_s^X,
  \qquad
  \E_Q \left[ |\log w(X)| \right]<\infty,
  \qquad
  \KL(Q^X\|P_s^X)<\infty.
\]
If \(X_i\) has marginal distribution \(Q^X\), then the expected one-step log
growth of \eqref{eq:input-martingale} is
\begin{equation}
  \E_Q \left[ \log w(X) \right]
  =
  \KL(Q^X\|P_s^X)-\KL(Q^X\|P_w^X).
  \label{eq:kl-drift}
\end{equation}
\end{proposition}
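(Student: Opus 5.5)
The plan is to write both KL divergences as expectations under \(Q^X\) of log Radon--Nikodym derivatives, use the chain rule for densities \(dQ^X/dP_s^X = (dQ^X/dP_w^X)\cdot w\), and subtract. Integrability is the only real issue.

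\emph{Step 1: chain rule.} Since \(Q^X\ll P_w^X\ll P_s^X\), both \(q_w\defeq dQ^X/dP_w^X\) and \(q_s\defeq dQ^X/dP_s^X\) exist. By \cref{ass:prespecified}, \(dP_w^X/dP_s^X=w\). The chain rule for Radon--Nikodym derivatives then gives
\[
  q_s(x)=q_w(x)\,w(x)\qquad P_s^X\text{-a.e.}
\]
This identity holds \(Q^X\)-a.e. as well, because \(Q^X\ll P_s^X\). Moreover, \(Q^X\)-almost surely we have \(q_s>0\) (the set where \(q_s=0\) has \(Q^X\)-measure zero), so also \(q_w>0\) and \(w>0\). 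Taking logarithms, \(Q^X\)-a.s.,
\[
  \log q_s(X)=\log q_w(X)+\log w(X).
\]

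\emph{Step 2: integrate.} By definition, \(\KL(Q^X\|P_s^X)=\E_Q[\log q_s(X)]\) and \(\KL(Q^X\|P_w^X)=\E_Q[\log q_w(X)]\). These expectations are well defined in \([0,\infty]\) because the negative part of \(\log q\) is integrable under \(Q^X\): indeed \(\E_Q[(\log q)^-]=\E_P[q(\log q)^-]\le e^{-1}\). The first divergence is finite by hypothesis, so \(\log q_s\in L^1(Q)\). Also \(\log w\in L^1(Q)\) by hypothesis. Hence \(\log q_w=\log q_s-\log w\in L^1(Q)\), which shows that \(\KL(Q^X\|P_w^X)\) is finite too. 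Taking \(Q\)-expectations in the Step 1 identity and using linearity of the expectation for integrable functions gives \eqref{eq:kl-drift}.

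\emph{Step 3: interpretation.} Since \(\log M_n=\sum_{i\le n}\log w(X_i)\) and each \(X_i\) has marginal \(Q^X\), the expected one-step increment is \(\E_Q[\log w(X)]\). No conditional structure is needed, because only marginals enter.

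The main obstacle is the integrability bookkeeping in Step 2. We must avoid an \(\infty-\infty\) difference, and the finiteness of \(\KL(Q^X\|P_w^X)\) is not assumed; it has to be derived from the two stated hypotheses as above. Handling the null sets where \(w=0\) is the other delicate point, and it is resolved by working \(Q^X\)-a.s.
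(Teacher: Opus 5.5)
Your proposal is correct and follows the paper's proof: apply the Radon--Nikodym chain rule to get the $Q^X$-a.s.\ identity $\log(dQ^X/dP_s^X)=\log(dQ^X/dP_w^X)+\log w$, deduce finiteness of $\KL(Q^X\|P_w^X)$ from the two stated integrability hypotheses, and take $Q^X$-expectations. Your bound $\E_Q[(\log q)^-]\le e^{-1}$ also makes explicit a step the paper leaves implicit, namely that finite $\KL(Q^X\|P_s^X)$ actually places $\log(dQ^X/dP_s^X)$ in $L^1(Q^X)$.
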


\noindent\emph{Proof sketch.}
The identity follows from the Radon--Nikodym chain rule
\[
  \frac{dQ^X}{dP_s^X}
  =
  \frac{dQ^X}{dP_w^X}\frac{dP_w^X}{dP_s^X},
\]
followed by taking \(Q^X\)-expectations.  The full proof is in
Appendix~\ref{app:proof-kl-drift}.

\cref{prop:kl-drift} gives the scientific interpretation of the global
relative monitor.  Although the proposition is stated for a general monitoring
distribution \(Q^X\), the intended case in deployment is \(Q^X=P_t^X\).  The
process grows when \(P_w^X\) is closer to the target input distribution than
\(P_s^X\) is in KL risk.  If \(P_w^X=P_t^X\), the drift is
\(\KL(P_t^X\|P_s^X)\).  If \(P_w^X\) improves over the source but is not exact,
the drift is still positive.  If \(P_w^X\) is worse than the source, the drift is
negative.

\bigskip
\begin{corollary}[Consistency under positive drift]
\label{cor:positive-drift}
Suppose \(X_i\) are iid from \(Q^X\) with \(Q^X\ll P_w^X\ll P_s^X\),
\(\E_Q \left[ |\log w(X)| \right]<\infty\), \(\KL(Q^X\|P_s^X)<\infty\), and
\[
  \KL(Q^X\|P_s^X)>\KL(Q^X\|P_w^X).
\]
Then
\[
  \frac{1}{n}\log M_n
  \to
  \KL(Q^X\|P_s^X)-\KL(Q^X\|P_w^X)
  \quad\text{almost surely},
\]
and \(\tau_{\conf}<\infty\) almost surely for every \(\alpha\in(0,1)\).
\end{corollary}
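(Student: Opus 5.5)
The proof will apply the strong law of large numbers to the log process and then use the sign of the limit. Write
\[
  \log M_n=\sum_{i=1}^n \log w(X_i).
\]
The first task is to check that each summand is almost surely finite and integrable under \(Q^X\).

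Integrability is part of the hypotheses, since \(\E_Q\left[|\log w(X)|\right]<\infty\). Finiteness follows because \(Q^X\ll P_w^X\) and \(P_w^X(\{w=0\})=\E_s\left[w(X)\one\{w(X)=0\}\right]=0\), so \(Q^X(\{w=0\})=0\). The value \(w=\infty\) is excluded because \(w\) is \([0,\infty)\)-valued. Hence \(\log w(X_i)\) are iid, real-valued and \(Q^X\)-integrable. By the strong law,
\[
  \tfrac1n\log M_n\to\E_Q\left[\log w(X)\right]\quad\text{almost surely}.
\]

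Next I invoke \cref{prop:kl-drift}. Its hypotheses are exactly the assumed ones: \(Q^X\ll P_w^X\ll P_s^X\), integrability of \(\log w\), and finite \(\KL(Q^X\|P_s^X)\). It identifies the limit as
\[
  \E_Q\left[\log w(X)\right]=\KL(Q^X\|P_s^X)-\KL(Q^X\|P_w^X).
\]
This proves the first claim. One point to note is that the identity also forces \(\KL(Q^X\|P_w^X)\) to be finite, so the difference is well defined.

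For the crossing claim, let \(D=\KL(Q^X\|P_s^X)-\KL(Q^X\|P_w^X)\), which is strictly positive by hypothesis. On the probability-one event where \(\tfrac1n\log M_n\to D\), there is an \(N\) such that \(\log M_n\ge nD/2\) for all \(n\ge N\). Therefore \(\log M_n\to\infty\), and \(M_n\) eventually exceeds \(1/\alpha\) for every fixed \(\alpha\in(0,1)\). So \(\tau_{\conf}<\infty\) almost surely.

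The simultaneous statement over all \(\alpha\) holds on the same single null set, because \(M_n\to\infty\) handles every threshold at once. The only delicate step is confirming that \(\log w(X_i)\) is finite almost surely; the absolute-continuity argument above handles it.
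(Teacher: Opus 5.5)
Your proposal is correct and follows the paper's own argument: the strong law applied to the increments $\log w(X_i)$, identification of the limit through \cref{prop:kl-drift}, and positivity of that limit forcing $\log M_n\to+\infty$ and hence $\tau_{\conf}<\infty$. Your extra check that $\log w(X_i)$ is finite almost surely is valid, though it already follows from the assumption $\E_Q\left[|\log w(X)|\right]<\infty$.
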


\noindent\emph{Proof sketch.}
Apply the strong law of large numbers to the increments \(\log w(X_i)\).  A
positive limiting average log growth forces \(M_n\) to cross every fixed
threshold.  The full proof is in Appendix~\ref{app:proof-positive-drift}.

The resulting online procedure is summarized in \cref{alg:better-source}.  It
multiplies the normalized correction values along the target input stream and
stops once the evidence exceeds the anytime-valid threshold.  A threshold
crossing is positive global relative evidence.  If the threshold is not crossed,
the appropriate conclusion is non-confirmation; by itself it is not a
level-controlled statement that the correction is worse than the source.  The
KL-drift identity still makes the path informative as a diagnostic.  Under
negative drift, \(n^{-1}\log M_n\) tends to a negative limit, so the process
typically decays rather than crossing the confirmation boundary.  This is one
failure mode that covariate balance confirmation can miss when the
balancing-function class is too narrow.  We do not turn decay into a formal
formal refutation procedure here, which would require a separate martingale under a
``no worse than source'' null.  We leave this as an open direction.

\begin{algorithm}[H]
\caption{Source-calibrated global relative evidence for a covariate correction}
\label{alg:better-source}
\begin{algorithmic}[1]
\Require Prespecified normalized correction \(w\) with \(\E_s \left[ w(X) \right]=1\), level \(\alpha\)
\State Set \(M_0=1\)
\For{\(n=1,2,\ldots\)}
  \State Observe target input \(X_n\)
  \State Compute \(e_n=w(X_n)\)
  \State Update \(M_n=M_{n-1}e_n\)
  \If{\(M_n\ge 1/\alpha\)}
    \State \Return positive global relative evidence for \(P_w^X\) over \(P_s^X\)
  \EndIf
\EndFor
\end{algorithmic}
\end{algorithm}

\section{Covariate Balance Confirmation and Finite-Source Effects}
\label{sec:balance}

The global e-process of \cref{sec:eprocess} answers a directional question:
whether the reweighted source input distribution is favored over the unweighted
source input distribution on the target stream.  It does not answer the absolute
question needed for downstream use.  A correction may be globally better than no
correction and still leave a residual covariate mismatch that is too large for
weighted calibration, weighted fitting, or diagnostics.  Covariate balance
confirmation therefore asks a within-tolerance question: is the corrected input
distribution close enough to the target input distribution on the prespecified
balancing functions?  Global relative evidence says whether the source-calibrated likelihood ratio
favors the correction over the unweighted source, whereas covariate balance
confirmation says whether it is good enough for the chosen downstream balance
criterion.  The latter is
successful confirmation that yields the tolerance-level certificate carrying the downstream guarantee, and it is
where the paper's main development lies.

This section separates three objects that are easy to conflate.  First, with known
weighted source moments, \(\varepsilon\)-balance is confirmed by checking whether
time-uniform target confidence sequences lie inside fixed tolerance bands.  Second,
with finite source samples, the weighted source moments themselves are uncertain,
so a stronger containment condition is needed to confirm the same
\(\varepsilon\)-balance claim.  Third, an expanded finite-source band is useful as
a compatibility diagnostic, but it supports only a weaker statement and should not be
reported as covariate balance confirmation.

\subsection{Balancing-function discrepancy}
\label{subsec:balancing-functions}

Let \(\mathcal F\) be a class of measurable real-valued functions
\[
  f:\cX\to\mathbb R
\]
for which the target and corrected expectations exist.  We call them
\emph{balancing functions}, following the covariate-balancing literature,
because their weighted moments define the aspects of the input distribution that
the correction is required to match.
For a proposed correction \(w\), define
\[
  \Delta_{\mathcal F}(w)
  =
  \sup_{f\in\mathcal F}
  \left|
  \E_t \left[ f(X) \right]-\E_w \left[ f(X) \right]
  \right|.
\]
The balancing-function class determines the meaning of balance.  
In applications we typically use a prespecified finite balancing-function class
\(\mathcal F=\{f_1,\ldots,f_m\}\).  The balancing functions should represent covariate
discrepancies that are relevant to the intended downstream use.  A small balancing-function class gives an interpretable and easier-to-confirm claim, whereas a richer
balancing-function class gives a stronger diagnostic but requires more target inputs and a
more conservative simultaneous confidence sequence.
\cref{tab:balancing-function-examples} lists balancing functions that recur in
practice, from coordinate means and scales to region indicators and kernel
landmarks.

\begin{table}[H]
\centering
\caption{Examples of balancing functions for finite covariate balance confirmation.  
The balancing-function class should be chosen before monitoring and should encode covariate discrepancies 
that matter for the downstream use.}
\label{tab:balancing-function-examples}
\renewcommand{\arraystretch}{1.15}
\resizebox{\linewidth}{!}{%
\begin{tabular}{@{}lll@{}}
\toprule
\textbf{Balance goal} & \textbf{Typical balancing functions \(f_j(x)\)} & \textbf{What is checked} \\
\midrule
Mean balance & \(x_j\) or clipped standardized coordinates & Important covariate means \\
Scale balance & \(x_j^2\) or tail-transformed coordinates & Marginal scale and tail mismatch \\
Interaction balance & \(x_jx_k\) or selected products & Low-order dependence structure \\
Model-relevant balance & Representation coordinates \(\phi_j(x)\) & Features used by a predictor or surrogate \\
Region balance & \(\one\{x\in A_j\}\) & Subgroups, bins, support regions, or tails \\
Kernel or local balance & \(k(x,z_j)\) for landmarks \(z_j\) & Local distributional mismatch or MMD approximations \\
\bottomrule
\end{tabular}%
}
\end{table}

For unbounded or heavy-tailed balancing functions, the functions should be standardized and
clipped to a bounded range before using bounded confidence sequences, or handled
with confidence sequences matched to the assumed tail behavior.  All balancing
functions and tolerances should be fixed before the monitoring stream is used;
choosing them after inspecting the same target inputs would require additional
post-selection control.

\subsection{Finite balancing-function class}
\label{subsec:finite-balancing-functions}

We develop covariate balance confirmation for a finite balancing-function class
\[
  \mathcal F=\{f_1,\ldots,f_m\},
\]
with finite target and corrected expectations for every \(f_j\).  Boundedness is
needed only for the explicit Hoeffding construction below; other confidence
sequences may be used under corresponding tail assumptions.
Let
\[
  \mu_{t,j}=\E_t  \left[f_j(X)\right],
  \qquad
  \mu_{w,j}=\E_w \left[f_j(X)\right].
\]
Then
\[
  \Delta_{\mathcal F}(w)
  =
  \max_{j\le m}|\mu_{t,j}-\mu_{w,j}|.
\]
Thus a vector of per-function tolerances
\(\varepsilon_1,\ldots,\varepsilon_m\) defines the finite-dimensional balance
claim
\begin{equation}
  |\mu_{t,j}-\mu_{w,j}|\le \varepsilon_j,
  \qquad j=1,\ldots,m.
  \label{eq:finite-balance-claim}
\end{equation}
Taking \(\varepsilon_j\equiv\varepsilon\) recovers
\(\Delta_{\mathcal F}(w)\le\varepsilon\).

As target inputs arrive, an ordinary confidence interval computed at one
prespecified sample size is not sufficient for continuous monitoring.  Recomputing
fixed-time intervals after every new target input and stopping when one first fits
inside a tolerance band can invalidate the nominal coverage guarantee.  A
\emph{confidence sequence} is designed for this setting: it is a sequence of
intervals that covers the same population quantity simultaneously over all
monitoring indices.  For each balancing function, let
\[
  C_{n,j}=[L_{n,j},U_{n,j}]
\]
denote the interval after the first \(n\) target inputs.  Allocating the error
probability jointly over monitoring indices and balancing functions gives
\begin{equation}
  \Prob\left\{
  \forall n\ge 1,\ \forall j\le m:
  \mu_{t,j}\in C_{n,j}
  \right\}\ge 1-\delta.
  \label{eq:simultaneous-cs}
\end{equation}
This event is stronger than coverage at any single fixed sample size.  On this
event, the analyst may inspect the intervals after every target input and stop at
the first data-dependent monitoring index when all required containment
conditions hold, without inflating the false-confirmation probability.  This
optional-stopping validity is the reason confidence sequences, rather than
ordinary confidence intervals, are used in this section.  For iid bounded balancing-function values,
\eqref{eq:simultaneous-cs} can be obtained from a Hoeffding construction, while
sub-Gaussian or variance-adaptive constructions apply under other tail
conditions.  Beyond iid sampling, the theorem remains valid whenever
\eqref{eq:simultaneous-cs} holds for the fixed target means under the actual
sequential distribution.

The empirical discrepancy
\[
  |\widehat\mu_{n,j}-\widehat\mu_{w,j}|
\]
is still useful descriptively, but it is not by itself an anytime-valid rule for
the population discrepancy
\[
  |\mu_{t,j}-\mu_{w,j}|.
\]
Here \(\widehat\mu_{n,j}=n^{-1}\sum_{i=1}^n f_j(X_i)\) is the running target
average, and \(\widehat\mu_{w,j}\) is the corresponding weighted source average
when the source expectation is estimated empirically.  The observed gap can look
small because of target-sampling noise, particularly early in monitoring, or
because repeated inspection selects the first favorable fluctuation.  The
confidence sequence converts this running diagnostic into a population-level
sequential confirmation rule: stopping is allowed only when every target mean that
remains plausible under \(C_{n,j}\) lies inside its prescribed tolerance band.

When the weighted source moments \(\mu_{w,j}\) are known, the confirmation rule
is simply containment in the fixed tolerance bands:
\begin{equation}
  C_{n,j}
  \subseteq
  [\mu_{w,j}-\varepsilon_j,\mu_{w,j}+\varepsilon_j],
  \qquad j=1,\ldots,m.
  \label{eq:balance-stopping}
\end{equation}
This containment rule is conservative in the right direction: it waits until all
plausible target means, according to the confidence sequences, are inside the
prespecified tolerance bands.

\bigskip
\begin{theorem}[Anytime-valid covariate balance confirmation]
\label{thm:balance-confirmation}
Assume \eqref{eq:simultaneous-cs}.  Define the stopping time
\[
  \tau_{\mathrm{bal}}
  =
  \inf\left\{
  n\ge 1:
  C_{n,j}\subseteq
  [\mu_{w,j}-\varepsilon_j,\mu_{w,j}+\varepsilon_j]
  \text{ for all }j\le m
  \right\}.
\]
If the correction is out of tolerance for at least one balancing function, meaning that
\[
  |\mu_{t,j}-\mu_{w,j}|>\varepsilon_j
  \quad\text{for some }j,
\]
then
\[
  \Prob\left\{\tau_{\mathrm{bal}}<\infty\right\}\le \delta.
\]
\end{theorem}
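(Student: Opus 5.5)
The argument is an event inclusion. We show that the event \(\{\tau_{\mathrm{bal}}<\infty\}\) lies inside the complement of the simultaneous coverage event in \eqref{eq:simultaneous-cs}. The bound \(\Prob\{\tau_{\mathrm{bal}}<\infty\}\le\delta\) then follows at once.

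Let
\[
  \mathcal E=\left\{\forall n\ge 1,\ \forall j\le m:\ \mu_{t,j}\in C_{n,j}\right\}
\]
denote the coverage event. By \eqref{eq:simultaneous-cs}, \(\Prob(\mathcal E^c)\le\delta\).

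Fix an index \(j^\ast\) for which \(|\mu_{t,j^\ast}-\mu_{w,j^\ast}|>\varepsilon_{j^\ast}\). Such an index exists by hypothesis, and it is deterministic because \(\mu_{t,j}\) and \(\mu_{w,j}\) are fixed (or at most \(\cG_0\)-measurable) parameters. Note that the hypothesis is a statement about these fixed parameters, not a random event.

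Now suppose \(\tau_{\mathrm{bal}}=n<\infty\) for some \(n\). By the definition of \(\tau_{\mathrm{bal}}\), the interval \(C_{n,j^\ast}\) lies in \([\mu_{w,j^\ast}-\varepsilon_{j^\ast},\mu_{w,j^\ast}+\varepsilon_{j^\ast}]\). Since \(\mu_{t,j^\ast}\) lies strictly outside that closed band, it cannot belong to \(C_{n,j^\ast}\). This is where the strict inequality in \eqref{eq:bad-null} is used. Hence the coverage event fails at monitoring index \(n\) and function \(j^\ast\), so \(\{\tau_{\mathrm{bal}}<\infty\}\subseteq\mathcal E^c\).

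Because the inclusion is pathwise and holds on the union over all possible stopping indices, no optional-stopping theorem is needed. The simultaneity over \(n\) in \eqref{eq:simultaneous-cs} is exactly what absorbs the data-dependent stopping.

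There is no real analytic obstacle. The only points to check are measurability and the role of \(j^\ast\). On measurability, \(\tau_{\mathrm{bal}}\) should be a stopping time with respect to \(\cG_n\), given that \(C_{n,j}\) is \(\cG_n\)-measurable and \(\mu_{w,j}\) and \(\varepsilon_j\) are \(\cG_0\)-measurable. In any case the bound only needs \(\{\tau_{\mathrm{bal}}<\infty\}\) to be an event. On \(j^\ast\), the violated function need not be known to the analyst, but the inclusion argument never requires it to be identified.

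If the \(\mu_{w,j}\) are \(\cG_0\)-measurable rather than deterministic, we would run the same argument conditionally on \(\cG_0\). This requires \eqref{eq:simultaneous-cs} to hold conditionally, and integrating then gives the same bound.
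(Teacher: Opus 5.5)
Your proposal is correct and matches the paper's proof. On the simultaneous coverage event, an interval containing an out-of-tolerance \(\mu_{t,j}\) can never lie inside the closed tolerance band, so \(\{\tau_{\mathrm{bal}}<\infty\}\) lies in the complement of that event, whose probability is at most \(\delta\). Your added remarks on measurability and on conditioning on \(\cG_0\) are refinements the paper leaves implicit.
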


\noindent\emph{Proof sketch.}
On the simultaneous confidence-sequence event, every target balancing-function mean remains
inside its confidence interval at all monitoring indices.  If any balancing function is outside the
allowed tolerance band, no confidence interval containing the true target mean
can be contained in that band.  The full proof is in
Appendix~\ref{app:proof-balance-confirmation}.

\bigskip
Theorem~\ref{thm:balance-confirmation} provides a false-confirmation guarantee for a sequential balance-confirmation procedure.  
The analyst may continue monitoring target inputs and stop at the first time when all target confidence sequences 
are contained in their prescribed tolerance bands.  
If the true discrepancy exceeds its tolerance for at least one balancing function, then the probability 
that the procedure ever stops and incorrectly confirms the correction is at most \(\delta\).  
The key point is that a small empirical gap alone does not constitute valid balance confirmation.  
Confirmation requires every value of each target balancing-function mean that remains plausible 
under the confidence sequence to lie inside the corresponding tolerance band.  
In this sense, the theorem converts a running empirical diagnostic into a valid confirmation decision and, upon stopping, a population-level certificate.  
It is not, however, a power guarantee: even when the correction is truly within tolerance, 
confirmation may take a long time if the tolerances are tight, the balancing-function variance is large, or the target stream is short.

\begin{remark}[What makes confirmation anytime-valid]
\label{rem:balance-anytime-valid}
The anytime-validity of covariate balance confirmation does not require the
balance statistic itself to be an e-process.  It follows from the simultaneous
coverage event
\[
  \Prob\left\{
    \forall n\ge1,\ \forall j\le m:
    \mu_{t,j}\in C_{n,j}
  \right\}\ge1-\delta.
\]
On this event, if any true balancing-function mean lies outside its tolerance
band, no confidence sequence containing that mean can ever be contained in the
band.  Hence false confirmation can occur only if the simultaneous confidence
sequence fails, with probability at most \(\delta\).  The certificate is
therefore valid under continuous monitoring and optional stopping, although its
scope remains restricted to the prespecified balancing functions and
tolerances.
\end{remark}

\subsection{Choices of time-uniform confidence sequences}
\label{subsec:hoeffding-cs}

Theorem~\ref{thm:balance-confirmation} does not depend on a particular
confidence-sequence construction.  Any sequence satisfying the simultaneous
coverage requirement in \eqref{eq:simultaneous-cs} may be used.  The choice
changes the interval width and hence the confirmation time, but not the logic of
the balance-confirmation procedure.  We summarize several representative options.

\medskip
\noindent\textbf{Hoeffding confidence sequence.}
For a self-contained implementation, suppose that each balancing-function value satisfies
\(f_j(X)\in[0,1]\) almost surely, and let
\[
  \widehat\mu_{n,j}=\frac{1}{n}\sum_{i=1}^n f_j(X_i).
\]
A union-over-time Hoeffding sequence is
\[
  C_{n,j}^{\mathrm{H}}
  =
  \left[
  \widehat\mu_{n,j}-r_{n,j}^{\mathrm{H}},
  \widehat\mu_{n,j}+r_{n,j}^{\mathrm{H}}
  \right]\cap[0,1],
\]
where
\begin{equation}
  r_{n,j}^{\mathrm{H}}
  =
  \sqrt{
  \frac{\log\{2m\pi^2n^2/(6\delta)\}}{2n}
  }.
  \label{eq:hoeffding-radius}
\end{equation}
This construction requires only boundedness.  It is transparent and easy to
reproduce, but it uses the worst-case range and can therefore be conservative
when the balancing-function variance is small.

\medskip
\noindent\textbf{Other time-uniform alternatives.}
The Hoeffding construction is not essential to our balance-confirmation theorem.
Any confidence sequence satisfying the required time-uniform coverage guarantee
may be substituted.  Empirical-Bernstein confidence sequences adapt their width
to an accumulated empirical or predictable variance and can therefore be
narrower than range-based Hoeffding sequences when the balancing-function values
have low variance; see \citet{HowardSR2021aos}.  Betting confidence sequences
construct nonnegative wealth processes for candidate means and invert those
processes to obtain time-uniform confidence sets; see
\citet{Waudby-SmithI2024jrsssb}.  These alternatives can improve statistical
efficiency, but they require additional choices of a uniform boundary, betting
strategy, or mixture.  In the Gaussian experiments of \cref{sec:experiments},
the coordinate balancing functions are unbounded and we therefore use the
explicit sub-Gaussian confidence sequence stated in
\cref{subsec:exp-balance}, rather than the bounded Hoeffding sequence.
\Cref{tab:cs-choices} summarizes these representative choices and their roles
in the present paper.

\begin{table}[t]
\centering
\caption{Representative confidence-sequence choices for target
balancing-function means.  The balance-confirmation theorem accepts any choice
that provides the required simultaneous time-uniform coverage.}
\label{tab:cs-choices}
\renewcommand{\arraystretch}{1.12}
\begin{tabular}{@{}>{\raggedright\arraybackslash}p{0.21\linewidth}>{\raggedright\arraybackslash}p{0.37\linewidth}>{\raggedright\arraybackslash}p{0.34\linewidth}@{}}
\toprule
\textbf{Construction} & \textbf{Main feature} & \textbf{Role in this paper} \\
\midrule
Hoeffding & Explicit and requires boundedness & Bounded-function reference construction \\
Sub-Gaussian & Explicit under a known variance proxy & Used in the Gaussian experiments \\
Empirical Bernstein & Adapts to accumulated variance & Optional efficiency improvement \\
Betting or mixture & Obtained by inverting wealth processes & Optional flexible alternative \\
\bottomrule
\end{tabular}
\end{table}

The validity result is agnostic to this choice.  It requires only that each
\(C_{n,j}\) be a time-uniform confidence sequence and that the simultaneous
error across the balancing functions be controlled as in
\eqref{eq:simultaneous-cs}.  Changing the confidence-sequence construction changes
the target-side uncertainty widths and hence the stopping times, but not the logic
or false-confirmation guarantee of the procedure.

The selected confidence sequence supplies the target-side uncertainty used by
both the known-source and finite-source balance procedures.  We next account for
the additional uncertainty created when normalization constants and weighted
source moments must be estimated from a finite source sample.

\subsection{Finite-source samples and normalization}
\label{sec:finite-source}

So far, the balance rule treats \(\E_s \left[w(X)\right]\) and \(\E_w \left[f(X)\right]\) as known.  In
practice, the source distribution is represented by a finite sample.  This
creates two distinct finite-source issues.  The first concerns the global
monitor of \cref{sec:eprocess}: its increment must have source expectation at
most one, so an unnormalized correction needs a conservative normalizer.  The
second concerns covariate balance confirmation: the weighted source moment that
centers each tolerance band is itself estimated.  These are separate problems
and require different remedies.

\subsubsection{Conservative normalizer for the e-process}
\label{subsec:conservative-normalizer}

This subsection concerns only the global e-process, not the balance-confirmation
procedure.  If the proposed correction is initially available in unnormalized
form \(\widetilde w\), the ideal normalizer is
\[
  Z=\E_s \left[\widetilde w(X)\right].
\]
If \(Z\) is unknown and replaced by a noisy empirical average, the global
monitor may no longer be exactly valid under the source-reference calibration
null.  A conservative repair is to use an upper confidence bound \(U_Z\) such
that
\[
  \Prob\left\{Z\le U_Z\right\}\ge 1-\eta.
\]
Then
\[
  e_n=\frac{\widetilde w(X_n)}{U_Z}
\]
has source expectation at most one on the event \(Z\le U_Z\).  Therefore the
resulting process is calibrated up to the source-normalization failure
probability.  This normalization issue is separate from the source-moment
uncertainty used below for covariate balance confirmation.

A simple implementation is available when \(0\le \widetilde w(x)\le B\) is known
for some finite \(B>0\), with \(\eta\in(0,1)\).  Conditional on all data used to construct \(\widetilde w\), let
\(X_1^{s,0},\ldots,X_{n_0}^{s,0}\stackrel{\mathrm{iid}}{\sim}P_s^X\) be an
independent source split reserved for normalization and define
\[
  \overline W_{n_0}
  =
  \frac{1}{n_0}\sum_{r=1}^{n_0}\widetilde w(X_r^{s,0}).
\]
A one-sided Hoeffding upper confidence bound is
\begin{equation}
  U_Z
  =
  \min\left\{
    B,
    \overline W_{n_0}
    +B\sqrt{\frac{\log(1/\eta)}{2n_0}}
  \right\},
  \label{eq:normalizer-hoeffding}
\end{equation}
which satisfies \(\Prob\left\{Z\le U_Z\right\}\ge1-\eta\), conditionally on the
construction data and hence also unconditionally.  The correction and \(U_Z\) are
then frozen before target monitoring.  For unbounded weights, a finite-sample
upper bound requires an explicit tail or moment assumption, or an intentionally
clipped correction; clipping changes the correction being monitored.

\bigskip
\begin{proposition}[E-process with conservative normalizer]
\label{prop:conservative-normalizer}
Suppose the nonnegative function \(\widetilde w\) is fixed before monitoring
and is \(\cG_0\)-measurable, with
\(Z=\E_s[\widetilde w(X)]\) satisfying \(0<Z<\infty\) almost surely.
Let \(\eta\in(0,1)\), and suppose \(U_Z\) is computed before monitoring, is
\(\cG_0\)-measurable, satisfies \(0<U_Z<\infty\), and obeys
\(\Prob\left\{Z\le U_Z\right\}\ge 1-\eta\).  Under the source-reference calibration null,
conditional on \(\cG_0\), for every realization satisfying \(Z\le U_Z\),
\[
  M_n=\prod_{i=1}^n \frac{\widetilde w(X_i)}{U_Z}
\]
is a nonnegative supermartingale with respect to \((\cG_n)_{n\ge0}\) under the corresponding conditional law.  Hence
\[
  \Prob_{H_0^{\mathrm{src}}}
  \left\{
  \sup_{n\ge0} M_n\ge \frac{1}{\alpha}
  \right\}
  \le \alpha+\eta.
\]
\end{proposition}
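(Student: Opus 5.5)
The plan is to argue conditionally on \(\cG_0\), split on the good normalization event \(G=\{Z\le U_Z\}\), and remove the conditioning at the end. Since \(\widetilde w\), \(Z\) and \(U_Z\) are all \(\cG_0\)-measurable, \(G\in\cG_0\). On any realization in \(G\), the source-reference null \eqref{eq:source-input-null} says \(X_i\mid\cG_{i-1}\sim P_s^X\). Hence, using that \(U_Z\) is \(\cG_0\subseteq\cG_{i-1}\)-measurable,
\[
  \E\!\left[\frac{\widetilde w(X_i)}{U_Z}\,\Big|\,\cG_{i-1}\right]
  =\frac{\E_s[\widetilde w(X)]}{U_Z}=\frac{Z}{U_Z}\le 1 .
\]
The only subtle point is that the inner expectation \(\E_s[\widetilde w(X)]\) is taken with \(\widetilde w\) frozen. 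This is legitimate because \(\widetilde w\) is \(\cG_0\)-measurable, so by a regular conditional distribution (or a disintegration) argument we may treat it as a fixed function once we condition on \(\cG_{i-1}\).

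Next, the nonnegativity of \(\widetilde w\) and the condition \(0<U_Z<\infty\) give \(M_n\ge 0\), and \(M_n\) is adapted to \((\cG_n)\). Writing \(M_n=M_{n-1}\cdot \widetilde w(X_n)/U_Z\) and pulling out the \(\cG_{n-1}\)-measurable factor \(M_{n-1}\) (nonnegative, so no integrability issue arises) yields
\[
  \E[M_n\mid\cG_{n-1}]\le M_{n-1}.
\]
So \(M_n\) is a nonnegative supermartingale with \(M_0=1\) under the conditional law given \(\cG_0\), on \(G\). This is the structure of the proof of \cref{thm:input-eprocess}, with equality replaced by inequality.

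Ville's inequality for nonnegative supermartingales then gives, on \(G\),
\[
  \Prob\!\left\{\sup_{n\ge0}M_n\ge 1/\alpha\,\Big|\,\cG_0\right\}\le\alpha .
\]
To conclude, decompose the crossing event \(A\):
\[
  \Prob(A)\le \E\!\left[\one_G\,\Prob(A\mid\cG_0)\right]+\Prob(G^c)\le \alpha+\eta .
\]
The main thing to get right is this conditioning step rather than any inequality. The supermartingale property holds only on realizations in \(G\), so Ville must be applied realization-wise under the conditional law, not to an unconditional process. Splitting on \(G\in\cG_0\) handles this cleanly. An alternative is to use the stopped or modified process \(M_n\one_G+\one_{G^c}\), but I will use the realization-wise conditional argument.
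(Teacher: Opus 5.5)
Your proposal is correct and follows the paper's proof essentially step for step. You fix a $\cG_0$-realization with $Z\le U_Z$, obtain the one-step bound $\E[\widetilde w(X_n)/U_Z\mid\cG_{n-1}]=Z/U_Z\le 1$, apply Ville's inequality under the conditional law, and then average over $\cG_0$ while paying $\Prob\{Z>U_Z\}\le\eta$ for the bad event. Your explicit decomposition $\Prob(A)\le\E[\one_G\Prob(A\mid\cG_0)]+\Prob(G^c)$ and your remark on freezing the $\cG_0$-measurable $\widetilde w$ only spell out details the paper leaves implicit.
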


\noindent\emph{Proof sketch.}
On the event \(Z\le U_Z\), the normalized increment
\(\widetilde w(X)/U_Z\) has source expectation at most one, so the product is a
supermartingale.  Ville's inequality and the failure probability of the source
bound give the result.  The full proof is in
Appendix~\ref{app:proof-conservative-normalizer}.  The Gaussian experiments use
analytically normalized exponential weights, so this conservative normalizer is
not needed there.

\begin{remark}[Confirmation-only normalizer]
The conservative normalizer is designed for the global relative monitoring
boundary \(M_n\ge 1/\alpha\).  It should be interpreted only as a one-sided
confirmation device: replacing the exact normalizer by an upper confidence bound
makes the increments conservative under the source-reference calibration null, but it does not
turn failure to cross the boundary into evidence against the proposed correction.
\end{remark}

\subsubsection{Weighted source moment uncertainty}
\label{subsec:source-moment-uncertainty}

For covariate balance confirmation, the weighted source moment
\[
  \mu_{w,j}
  =
  \frac{\E_s  \left[\widetilde w(X)f_j(X)\right]}{\E_s \left[\widetilde w(X)\right]}
\]
is also often estimated.  Suppose a source split independent of the data used to construct the correction,
or another conditionally valid construction, gives simultaneous source confidence
intervals
\[
  \mu_{w,j}\in[\ell_{w,j},u_{w,j}],
  \qquad j=1,\ldots,m,
\]
with probability at least \(1-\eta\).  The target confidence sequences still
cover \(\mu_{t,j}\) simultaneously over time and balancing functions with probability at
least \(1-\delta\).  On the joint good event, finite source uncertainty leads to
two distinct questions.

First, can we confirm the original claim
\[
  |\mu_{t,j}-\mu_{w,j}|\le \varepsilon_j
\]
even though the exact value of \(\mu_{w,j}\) is unknown?  Second, if such confirmation is not yet possible, is the target behavior at least compatible with
\(\varepsilon_j\)-balance for some weighted source moment allowed by the source
confidence interval?  The first question concerns formal confirmation.  The
second concerns only a source-uncertainty diagnostic.  They lead to different
bands because they use different quantifiers over the unknown source moment.

\noindent\textbf{Confirmation band: close to every plausible source moment.}
To confirm the original \(\varepsilon_j\)-balance claim, the target confidence
sequence must lie within \(\varepsilon_j\) of the true \(\mu_{w,j}\), regardless
of where that true value lies in \([\ell_{w,j},u_{w,j}]\).  The appropriate band
is therefore the intersection of all tolerance bands centered at plausible
weighted source moments:
\begin{align}
  B^{\mathrm{conf}}_j
  &:=
  \bigcap_{\nu\in[\ell_{w,j},u_{w,j}]}
  [\nu-\varepsilon_j,\nu+\varepsilon_j] \\
  &=
  [u_{w,j}-\varepsilon_j,\ \ell_{w,j}+\varepsilon_j].
  \label{eq:inner-band}
\end{align}
The containment rule
\[
  C_{n,j}\subseteq B^{\mathrm{conf}}_j
\]
means that every value of the target mean still plausible at monitoring index \(n\) is within
\(\varepsilon_j\) of every weighted source mean allowed by the source interval.
In particular, it is within \(\varepsilon_j\) of the true \(\mu_{w,j}\).
Consequently, requiring this containment for all balancing functions confirms the original
\(\varepsilon\)-balance claim with false-confirmation probability at most
\(\delta+\eta\).
The confirmation band is the band used by the proposed algorithm.  It contracts
as source uncertainty grows and can be empty when
\[
  u_{w,j}-\ell_{w,j}>2\varepsilon_j.
\]
An empty confirmation band is not evidence that the correction is poor.  It says
that the source sample is too uncertain to confirm the requested tolerance for
that balancing function.  More source information, a wider tolerance, or a less variable
correction is then required before exact confirmation is possible.

\bigskip
\begin{corollary}[Finite-source covariate balance confirmation]
\label{cor:finite-source-balance}
Suppose the source intervals satisfy
\[
  \Prob\left\{
    \mu_{w,j}\in[\ell_{w,j},u_{w,j}]
    \text{ for all }j\le m
  \right\}\ge1-\eta,
\]
and the target confidence sequences satisfy \eqref{eq:simultaneous-cs}.  Define
\[
  \tau_{\mathrm{bal}}^{\mathrm{fs}}
  =
  \inf\left\{
    n\ge1:
    C_{n,j}\subseteq B_j^{\mathrm{conf}}
    \text{ for all }j\le m
  \right\}.
\]
If \(|\mu_{t,j}-\mu_{w,j}|>\varepsilon_j\) for some \(j\), then
\[
  \Prob\left\{\tau_{\mathrm{bal}}^{\mathrm{fs}}<\infty\right\}
  \le \delta+\eta.
\]
\end{corollary}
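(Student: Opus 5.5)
The plan is to reduce \cref{cor:finite-source-balance} to the same containment argument used for \cref{thm:balance-confirmation}, now run on the intersection of two good events. Let
\[
  E_t=\left\{\forall n\ge1,\ \forall j\le m:\ \mu_{t,j}\in C_{n,j}\right\},
  \qquad
  E_s=\left\{\forall j\le m:\ \mu_{w,j}\in[\ell_{w,j},u_{w,j}]\right\}.
\]
By \eqref{eq:simultaneous-cs}, $\Prob(E_t^c)\le\delta$. By hypothesis, $\Prob(E_s^c)\le\eta$. A union bound then gives $\Prob\{(E_t\cap E_s)^c\}\le\delta+\eta$. It therefore suffices to show that $\tau_{\mathrm{bal}}^{\mathrm{fs}}=\infty$ on $E_t\cap E_s$ whenever some $j$ has $|\mu_{t,j}-\mu_{w,j}|>\varepsilon_j$. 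No independence between the source intervals and the target stream is needed, since only the union bound is used.

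The key deterministic step is the inclusion $B_j^{\mathrm{conf}}\subseteq[\mu_{w,j}-\varepsilon_j,\mu_{w,j}+\varepsilon_j]$ on $E_s$. This holds because $\mu_{w,j}$ is one of the $\nu$ in the intersection defining $B_j^{\mathrm{conf}}$ in \eqref{eq:inner-band}. Equivalently, $u_{w,j}-\varepsilon_j\ge\mu_{w,j}-\varepsilon_j$ and $\ell_{w,j}+\varepsilon_j\le\mu_{w,j}+\varepsilon_j$. If $B_j^{\mathrm{conf}}$ is empty, the inclusion holds trivially.

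Fix the offending index $j$ and any $n\ge1$, and suppose $C_{n,j}\subseteq B_j^{\mathrm{conf}}$. On $E_t$ we have $\mu_{t,j}\in C_{n,j}$, so on $E_t\cap E_s$ the inclusion above gives $\mu_{t,j}\in[\mu_{w,j}-\varepsilon_j,\mu_{w,j}+\varepsilon_j]$. This contradicts $|\mu_{t,j}-\mu_{w,j}|>\varepsilon_j$. Hence the containment fails for this $j$ at every $n$, the stopping condition (which requires all $j\le m$) is never met, and $\tau_{\mathrm{bal}}^{\mathrm{fs}}=\infty$ on the good event. The bound $\Prob\{\tau_{\mathrm{bal}}^{\mathrm{fs}}<\infty\}\le\Prob\{(E_t\cap E_s)^c\}\le\delta+\eta$ follows.

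There is no real obstacle; the only point needing care is measurability and quantification. The intervals $[\ell_{w,j},u_{w,j}]$ are random but $\cG_0$-measurable, and the inclusion of bands is a pathwise statement on $E_s$, so no conditioning argument is required. The strict inequality in the hypothesis is what makes the contradiction work, exactly as in \cref{thm:balance-confirmation}. An alternative route is to note that on $E_s$ we have $\tau_{\mathrm{bal}}^{\mathrm{fs}}\ge\tau_{\mathrm{bal}}$ pointwise and then invoke \cref{thm:balance-confirmation}. I prefer the direct event argument, because it avoids needing the joint probability of $\{\tau_{\mathrm{bal}}<\infty\}\cap E_s$ and makes the $\delta+\eta$ accounting transparent.
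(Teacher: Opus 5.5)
Your proof is correct and follows the paper's argument essentially verbatim. Both intersect the source-coverage event $E_s$ with the time-uniform target event $E_t$, and both note that $\mu_{w,j}$ is one of the centers in the intersection defining $B_j^{\mathrm{conf}}$, so $B_j^{\mathrm{conf}}\subseteq[\mu_{w,j}-\varepsilon_j,\mu_{w,j}+\varepsilon_j]$. This forces $\tau_{\mathrm{bal}}^{\mathrm{fs}}=\infty$ on $E_s\cap E_t$, and a union bound finishes. Your alternative route also works and needs no joint probability, since $\{\tau_{\mathrm{bal}}^{\mathrm{fs}}<\infty\}\subseteq E_s^c\cup\{\tau_{\mathrm{bal}}<\infty\}$ combined with the known-source theorem already gives $\eta+\delta$.
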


\noindent\emph{Proof sketch.}
On the joint event that all source intervals and all target confidence sequences
cover, the confirmation band is contained in the tolerance band around the true
\(\mu_{w,j}\).  An out-of-tolerance target mean therefore prevents containment at
every monitoring index.  The complement of the joint event has probability at
most \(\delta+\eta\).  The full proof is in
Appendix~\ref{app:proof-finite-source-balance}.

\noindent\textbf{Compatibility band: close to at least one plausible source
moment.}
A natural practical alternative is the union of the same tolerance bands:
\begin{align}
  B^{\mathrm{comp}}_j
  &:=
  \bigcup_{\nu\in[\ell_{w,j},u_{w,j}]}
  [\nu-\varepsilon_j,\nu+\varepsilon_j] \\
  &=
  [\ell_{w,j}-\varepsilon_j,\ u_{w,j}+\varepsilon_j].
  \label{eq:expanded-band}
\end{align}
Containment
\[
  C_{n,j}\subseteq B^{\mathrm{comp}}_j
\]
means that every currently plausible target mean lies within
\(\varepsilon_j\) of at least one weighted source moment compatible with the
finite source interval.  This can be useful operationally, but it does not establish closeness to the true \(\mu_{w,j}\), which may lie near the opposite end
of the interval.  On the joint good event, it supports at most the relaxed bound
\[
  |\mu_{t,j}-\mu_{w,j}|
  \le
  \varepsilon_j+(u_{w,j}-\ell_{w,j}),
\]
not the original \(\varepsilon_j\)-balance statement.
The confirmation band is an intersection because the confirmed claim must hold for
every source moment still plausible under the source confidence interval.  The
compatibility band is a union because it asks only whether some plausible source
moment could explain the target behavior.  \Cref{tab:finite-source-bands}
contrasts the two bands and their different inferential meanings.

\begin{table}[t]
\centering
\caption{Two finite-source bands for balancing function \(f_j\).  The confirmation band is
used by the proposed procedure.  The compatibility band is optional diagnostic
information and does not confirm the original \(\varepsilon_j\)-balance claim.}
\label{tab:finite-source-bands}
\renewcommand{\arraystretch}{1.15}
\resizebox{\linewidth}{!}{%
\begin{tabular}{@{}llll@{}}
\toprule
\textbf{Band} & \textbf{Set operation} & \textbf{Quantifier over source mean} &
\textbf{Interpretation} \\
\midrule
\(B_j^{\mathrm{conf}}=[u_{w,j}-\varepsilon_j,\ell_{w,j}+\varepsilon_j]\) &
Intersection & For every plausible \(\mu_{w,j}\) &
Formal \(\varepsilon_j\)-balance confirmation \\
\(B_j^{\mathrm{comp}}=[\ell_{w,j}-\varepsilon_j,u_{w,j}+\varepsilon_j]\) &
Union & For at least one plausible \(\mu_{w,j}\) &
Source-uncertainty compatibility diagnostic \\
\bottomrule
\end{tabular}%
}
\end{table}

A one-dimensional example illustrates the distinction.  Suppose
\([\ell_{w,j},u_{w,j}]=[0,1]\) and \(\varepsilon_j=0.2\).  The compatibility
band is \([-0.2,1.2]\).  A target mean near \(1.2\) is compatible with a source
moment near \(1\), but if the true weighted source mean is near \(0\), the true
discrepancy is about \(1.2\), far larger than \(0.2\).  The confirmation band is
\([0.8,0.2]\), which is empty.  The correct conclusion is therefore not that the
correction has failed, but that the source data are insufficient to confirm
\(0.2\)-balance.

The two bands thus play complementary but unequal roles.  The confirmation band
answers the paper's inferential question and is used for formal confirmation.  The
compatibility band is included only to diagnose source-limited evidence: it can
show that the target stream remains consistent with the correction even when
finite source uncertainty prevents formal confirmation.  Reporting the latter
must therefore use language such as ``compatible with \(\varepsilon\)-balance,''
not ``\(\varepsilon\)-balance confirmed.''

\subsubsection{Effective sample size}
\label{subsec:ess}

Even if the correction is valid, highly variable weights make weighted source
moments unstable.  A useful diagnostic is the effective sample size
\[
  \ess(w)
  =
  \frac{\left(\sum_{i=1}^{n_s}w_i\right)^2}
       {\sum_{i=1}^{n_s}w_i^2}.
\]
Small \(\ess(w)\) does not invalidate the target-input e-process, but it weakens
downstream weighted calibration and widens source moment intervals.  In practice,
confirmation should report both relative evidence and the effective
sample size of the correction.

\subsection{Unified anytime-valid balance procedure}
\label{subsec:balance-algorithm}

All quantities used by the procedure have now been defined.  With known weighted
source moments, the confirmation and compatibility bands coincide with the fixed
tolerance band around \(\mu_{w,j}\).  With finite-source moments, the procedure
uses the contracted confirmation band from
\cref{subsec:source-moment-uncertainty} for formal confirmation and may report
the expanded compatibility band only as a diagnostic.  \Cref{alg:balance}
collects the two cases in one implementation.

\begin{algorithm}[ht!]
\caption{Anytime-valid covariate balance confirmation using balancing functions}
\label{alg:balance}
\begin{algorithmic}[1]
\Require Prespecified correction \(w\), balancing functions \(f_1,\ldots,f_m\),
tolerances \(\varepsilon_1,\ldots,\varepsilon_m\), target level \(\delta\),
source level \(\eta\)
\If{weighted source moments \(\mu_{w,j}=\E_w \left[ f_j(X) \right]\) are known}
  \State Set confirmation bands \(B^{\mathrm{conf}}_j=[\mu_{w,j}-\varepsilon_j,\mu_{w,j}+\varepsilon_j]\)
  \State Set compatibility bands \(B^{\mathrm{comp}}_j=B^{\mathrm{conf}}_j\)
  \State False-confirmation level is \(\delta\)
\Else
  \State Build conditionally valid simultaneous source confidence intervals
  \([\ell_{w,j},u_{w,j}]\) for \(\mu_{w,j}\) at level \(1-\eta\)
  \State Set confirmation bands
  \(B^{\mathrm{conf}}_j=[u_{w,j}-\varepsilon_j,\ \ell_{w,j}+\varepsilon_j]\)
  \State Set compatibility bands
  \(B^{\mathrm{comp}}_j=[\ell_{w,j}-\varepsilon_j,\ u_{w,j}+\varepsilon_j]\)
  \State If any \(B^{\mathrm{conf}}_j\) is empty, \(\varepsilon\)-balance cannot be confirmed from the current source information
  \State False-confirmation level for \(\varepsilon\)-balance is \(\delta+\eta\)
\EndIf
\For{\(n=1,2,\ldots\)}
  \State Observe target input \(X_n\)
  \For{\(j=1,\ldots,m\)}
    \State Update \(\widehat\mu_{n,j}=n^{-1}\sum_{i=1}^n f_j(X_i)\)
  \EndFor
  \State Update \(C_{n,1},\ldots,C_{n,m}\) using a jointly valid construction satisfying \eqref{eq:simultaneous-cs}
  \If{\(C_{n,j}\subseteq B^{\mathrm{conf}}_j\) for all \(j\)}
    \State \Return confirm \(\varepsilon\)-balance of \(w\)
  \ElsIf{\(C_{n,j}\subseteq B^{\mathrm{comp}}_j\) for all \(j\)}
    \State Optionally report compatibility only
    \Statex \hspace{\algorithmicindent} Do not report \(\varepsilon\)-balance confirmation
  \EndIf
\EndFor
\end{algorithmic}
\end{algorithm}

\subsection{Reporting global evidence and balance confirmation together}
\label{subsec:combined}

The global monitor and the balance confirmation procedure are procedurally
separate: neither consumes the other's output, although both are computed from
the same target stream and are therefore generally statistically dependent.
Each has its own validity statement.  They are run in parallel and reported
together because they provide nonredundant but individually incomplete
information.  Their blind spots can overlap, so neither should be treated as a
substitute for a sufficiently rich balancing-function class.  A natural
reporting order, from cheapest and coarsest to strongest, is
\[
\begin{gathered}
  \underbrace{\text{source-calibrated global evidence}}_{\text{direction, no restriction to }\mathcal F}
  \;;\;
  \underbrace{\varepsilon\text{-balance confirmation}}_{\text{tolerance, local to }\mathcal F}
  \\[4pt]
  \longrightarrow\;
  \text{downstream weighted conformal calibration},
\end{gathered}
\]
where the semicolon marks two parallel checks and only the arrow marks a genuine
dependence: in the proposed workflow, downstream deployment is gated on the
tolerance-level confirmation decision and resulting certificate.  Source-calibrated global evidence can accumulate quickly
when the target stream is clearly not source-like, while sustained negative
log-growth is diagnostic of a globally harmful direction.  Covariate balance
confirmation usually takes longer because it must shrink confidence sequences
until the remaining uncertainty fits inside the tolerance bands.  This asymmetry
reflects the different inferential targets: directional relative evidence can be
available before the procedure can issue a tolerance-level adequacy certificate.

The crossing event \(M_n\ge 1/\alpha\) provides source-calibrated evidence for
the proposed weighted input distribution relative to the unweighted source,
without restricting attention to the selected balancing-function class.  The balance stopping rule is quantitatively stronger but
narrower: within the chosen balancing-function class and tolerance, it supports
the claim that weighted source covariates match target covariates closely
enough.  This second statement is the one that matters when the correction is
used as a downstream reliability claim.  The first is a broader relative
monitor that can warn about some globally harmful correction directions while a
limited balancing-function class looks satisfied.  It need not detect an
incomplete correction that nevertheless improves on the source in KL risk.

\section{Acceptable-Shift Monitoring and Downstream Deployment}
\label{sec:tolerance}

The procedures developed above support two distinct operational decisions.  The
first is a regime-monitoring decision: whether the incoming target stream has
moved beyond a prespecified family of covariate shifts that the application is
prepared to handle.  The second is a deployment decision: once a particular
correction has received the required covariate-balance confirmation, whether and
how it should enter a downstream importance-weighted procedure.  We illustrate
these two uses with an acceptable exponential-tilt region and weighted conformal
prediction.  They are complementary.  The acceptable-region test can warn that
no correction in a manageable family explains the target stream, whereas
balance confirmation evaluates a particular correction on the functions and
tolerances required for downstream deployment.

\subsection{Testing whether covariate shift exceeds an acceptable region}
\label{subsec:exp-family}

In many deployments, the relevant question is not whether the target input
distribution differs from the source at all.  Moderate covariate shift may be
expected and manageable.  The operational concern is whether the target stream
has moved beyond a prespecified regime for which available corrections, models,
or operating policies are regarded as adequate.  We formulate this question
using an exponential-tilt family and construct an anytime-valid test of the
composite null that some correction in the acceptable region explains the
incoming stream.

Consider the exponential-tilt family
\begin{equation}
  w_\theta(x)
  =
  \exp\left\{\theta^\top\phi(x)-\psi_s(\theta)\right\},
  \qquad
  \psi_s(\theta)
  =
  \log \E_s\!\left[\exp\left\{\theta^\top\phi(X)\right\}\right],
  \label{eq:exp-tilt}
\end{equation}
where \(\phi(x)\) is a prespecified feature vector and \(\theta\) belongs to the
natural parameter domain of \(\psi_s\).  The weight \(w_\theta\) induces the
corrected input distribution
\[
  dP_\theta^X(x)=w_\theta(x)\,dP_s^X(x).
\]
For illustration, let
\[
  \Theta_0=\{\theta:\|\theta\|\le\kappa\}
\]
be the set of corrections regarded as operationally acceptable.  The feature
map, its scaling, the norm, and \(\kappa\) must be fixed before monitoring and
should be justified by domain requirements or historical operating ranges.  A
raw parameter-radius interpretation is not invariant to rescaling of
\(\phi\).  When more interpretable, the acceptable set can instead be specified
through a divergence limit, a moment-deviation constraint, or another compact
or convex criterion tied to downstream requirements.

The composite sequential null hypothesis is
\begin{equation}
  H_0^{\mathrm{tol}}:
  \quad
  \text{there exists a fixed }\theta\in\Theta_0\text{ such that }
  X_i\mid\cG_{i-1}\sim P_\theta^X
  \text{ for every }i\ge1.
  \label{eq:exp-tilt-null}
\end{equation}
Thus, under the null, at least one correction in the acceptable region explains
the target stream.  Rejecting \cref{eq:exp-tilt-null} gives evidence that the
entire acceptable exponential-tilt region is inadequate for the incoming
stream.  It does not identify the true correction parameter.

To test this composite null, choose a fixed vector \(\lambda\) before observing
the target stream.  The vector \(\lambda\) is a \emph{test direction}: it
specifies the feature-space direction in which departures beyond the acceptable
region will be probed.  It is not itself a correction parameter being tested.
A scientifically motivated alternative or an externally proposed correction
outside \(\Theta_0\) may suggest a useful direction, but the test does not assert
that this alternative is correct.  Choosing \(\lambda\) after inspecting the
monitoring data generally invalidates the fixed-direction guarantee unless
sample splitting or a valid mixture construction is used.

Choose \(\lambda\) so that \(\theta+\lambda\) lies in the natural parameter
domain of \(\psi_s\) for every \(\theta\in\Theta_0\).  Then the
exponential-family identity gives
\[
  \log \E_{P_\theta^X}\left[
  \exp\{\lambda^\top\phi(X)\}
  \right]
  =
  \psi_s(\theta+\lambda)-\psi_s(\theta).
\]
Define the worst-case log moment-generating function over the acceptable region
by
\begin{equation}
  \psi_{\Theta_0}(\lambda)
  \defeq
  \sup_{\theta\in\Theta_0}
  \left\{
    \psi_s(\theta+\lambda)-\psi_s(\theta)
  \right\},
  \qquad
  \psi_{\Theta_0}(\lambda)<\infty.
  \label{eq:composite-log-mgf}
\end{equation}
The one-step e-value in direction \(\lambda\) is
\begin{equation}
  e_i(\lambda)
  =
  \exp\left\{
    \lambda^\top\phi(X_i)-\psi_{\Theta_0}(\lambda)
  \right\}.
  \label{eq:exp-tilt-evalue}
\end{equation}
For every null value \(\theta\in\Theta_0\),
\[
  \E_{P_\theta^X}\!\left[e_i(\lambda)\mid\cG_{i-1}\right]
  \le1.
\]
Consequently,
\begin{equation}
  M_n(\lambda)
  =
  \prod_{i=1}^n e_i(\lambda)
  \label{eq:exp-tilt-eprocess}
\end{equation}
is a nonnegative supermartingale under every distribution in the composite null.
In particular,
\begin{equation}
  \sup_{\theta\in\Theta_0}
  \Prob_{P_\theta^X}\!\left\{
    \sup_{n\ge1}M_n(\lambda)\ge\frac{1}{\alpha}
  \right\}
  \le\alpha.
  \label{eq:exp-tilt-validity}
\end{equation}
A boundary crossing therefore gives anytime-valid evidence against the entire
acceptable-region null, as witnessed in the prespecified direction
\(\lambda\).  It should not be interpreted as evidence that \(\lambda\) is a
true parameter or that the target distribution has been identified.
Conversely, failure to cross is inconclusive and does not establish that the
target remains inside the acceptable region.

The choice of \(\lambda\) determines power.  If the monitoring distribution is
\(P_{\theta_1}^X\), where \(\theta_1\) lies in the interior of the natural
parameter domain and \(\psi_s\) is differentiable at \(\theta_1\), then
\begin{equation}
  \E_{P_{\theta_1}^X}\!\left[\log e_i(\lambda)\right]
  =
  \lambda^\top\nabla\psi_s(\theta_1)
  -
  \psi_{\Theta_0}(\lambda).
  \label{eq:composite-tilt-drift}
\end{equation}
Positive asymptotic growth requires
\[
  \lambda^\top\nabla\psi_s(\theta_1)
  >
  \psi_{\Theta_0}(\lambda).
\]
Thus, being outside \(\Theta_0\) does not by itself guarantee rapid detection by
a fixed direction.  The alternative must be sufficiently separated from the
worst-case acceptable member in the chosen feature direction.  When no single
direction is clearly preferred, a probability mixture
\[
  \overline M_n
  =
  \int M_n(\lambda)\,\Pi(d\lambda),
\]
with \(\Pi\) fixed before monitoring and supported on directions for which
\(\psi_{\Theta_0}(\lambda)<\infty\), remains an e-process and spreads power over
multiple prespecified directions.

The punch line is a set-level monitoring statement: a crossing provides
time-uniform evidence that the entire prespecified family of acceptable
corrections is insufficient for the target stream, as detected by the chosen
direction or mixture.  This is distinct from covariate balance confirmation,
which asks whether a particular correction satisfies absolute tolerances for
selected balancing functions and can therefore be deployed in a specified
downstream task.

\subsection{Weighted conformal prediction after balance confirmation}
\label{sec:weighted-conformal}

We next explain how confirmed covariate corrections enter a downstream task that
is standard for conformal prediction readers: weighted split conformal
calibration under covariate shift.  The point of this subsection is not to
introduce a new conformal procedure.  Weighted conformal calibration already
specifies how an input density ratio should enter the calibration quantile.  Our
contribution is the preceding confirmation step: before using a proposed
correction \(w\) in that quantile, we ask whether it is credible for the
incoming target stream and for the downstream balancing functions.

Let \(s(x,y)\) be any fitted nonconformity score constructed from the training
split \(\Dtr\).  If a predictive distribution \(p(y\mid x,\Dtr)\) is available,
one possible score is
\[
  s(x,y)=-\log p(y\mid x,\Dtr).
\]
No special notation is needed for the predictive distribution; it may be
Bayesian, likelihood-based, ensemble-based, or any other fitted predictive
model.  Throughout this subsection, \(\Dtr\) is independent of the calibration
split and the target test point, and the conformal validity argument is
conditional on \(\Dtr\), so that \(s\) is treated as fixed.  Under pure covariate
shift,
\[
  P_t(Y\mid X)=P_s(Y\mid X),
\]
the conditional prediction problem is unchanged.  The input distribution changes.
Therefore the correction enters through the calibration measure, not through a
predictive tilt in \(y\).  In short,
\[
  \text{confirmed covariate correction}
  \quad\Longrightarrow\quad
  \text{weighted conformal calibration for target inputs}.
\]

\subsubsection{Weighted split conformal calibration}
\label{subsec:weighted-split}

Given calibration data
\[
  \Dcal=\{(X_i^s,Y_i^s)\}_{i=1}^{n_{\cali}},
\]
compute scores
\[
  S_i=s(X_i^s,Y_i^s).
\]
With exact covariate weights \(w^\star=dP_t^X/dP_s^X\), the target score
distribution is represented by the weighted calibration distribution.  Following
\citet{TibshiraniR2019neurips}, exact target coverage requires a \emph{test-point
dependent} normalization that places mass on the test score as well.  For a test
input \(x\), define the normalized weights
\[
  p_i^w(x)
  =
  \frac{w(X_i^s)}{\sum_{j=1}^{n_{\cali}}w(X_j^s)+w(x)},
  \quad i=1,\ldots,n_{\cali},
  \qquad
  p_\infty^w(x)
  =
  \frac{w(x)}{\sum_{j=1}^{n_{\cali}}w(X_j^s)+w(x)},
\]
and the weighted quantile
\[
  \widehat q_w(x)
  =
  \inf\left\{
  q:\;
  \sum_{i=1}^{n_{\cali}}p_i^w(x)\,\one\{S_i\le q\}
  \ge 1-\alpha
  \right\},
\]
where \(\widehat q_w(x)=+\infty\) whenever
\(\sum_i p_i^w(x)<1-\alpha\), i.e.\ the point mass \(p_\infty^w(x)\) at \(+\infty\)
absorbs the remaining probability.  The prediction set is
\[
  C_w(x)=\{y:s(x,y)\le \widehat q_w(x)\}.
\]
Under exact covariate shift with \(w=w^\star\), this construction gives
finite-sample target coverage
\[
  \Prob_t\left\{Y\in C_w(X)\mid\Dtr\right\}
  \ge 1-\alpha.
\]
A simpler \emph{plug-in} threshold drops the test-point mass and, when
\(\sum_{i=1}^{n_{\cali}}w(X_i^s)>0\), uses the fixed weighted quantile
\[
  \widehat q_w^{\,\mathrm{plug}}
  =
  \inf\left\{
  q:
  \frac{\sum_{i=1}^{n_{\cali}}w(X_i^s)\one\{S_i\le q\}}
       {\sum_{i=1}^{n_{\cali}}w(X_i^s)}
  \ge 1-\alpha
  \right\}.
\]
If the calibration-weight sum is zero, set
\(\widehat q_w^{\,\mathrm{plug}}=+\infty\).  This plug-in rule is only
asymptotically exact.  Define the corresponding plug-in set by
\[
  C_w^{\mathrm{plug}}(x)
  =
  \{y:s(x,y)\le \widehat q_w^{\,\mathrm{plug}}\}.
\]
Its finite-sample and approximate-weight error can be related to weighted
empirical-CDF error and score-CDF mismatch, as illustrated by
\cref{prop:coverage-degradation}.

\subsubsection{Validity and efficiency roles}
\label{subsec:validity-efficiency}

Under exact covariate shift and exact weights, weighted conformal calibration is
the coverage mechanism.  Covariate balance confirmation plays a different role:
it is a gatekeeper for the proposed correction.  It asks whether the weights used
by the conformal quantile are credible for the incoming target stream and the
chosen downstream balancing functions.  Confirmation therefore supports the
input weighting step; it does not replace the conformal calibration argument.

The fitted score \(s\) controls the geometry and efficiency of the prediction set.
If the score is obtained from a likelihood-based model, the same weights may also
be used during training.  For example, in a Bayesian model one may form the
weighted posterior
\[
  \pi_w(\theta\mid\Dtr)
  \propto
  \pi_0(\theta)
  \prod_{i\in\Dtr}p_\theta(y_i^s\mid x_i^s)^{w(x_i^s)}.
\]
This can improve efficiency under model misspecification by fitting the
predictive model toward the target covariate region.  But weighted fitting is not
the core validity repair.  The core validity repair under covariate shift remains
weighted calibration under the corrected input distribution.

\noindent\textbf{Covariate shift versus predictive tilting.}
For label shift, the target conditional predictive distribution changes and
predictive tilting is central.  For pure covariate shift, the target conditional
predictive distribution is assumed invariant.  The correction should therefore
enter primarily through the calibration measure via \(w_X(x)\).  Weighted fitting
may improve efficiency under misspecification, but unnecessary predictive tilting
can distort the conditional model.

\subsubsection{Approximate weights and residual score discrepancy}
\label{subsec:approx-weights}

When \(w\) is approximate, target coverage depends on how well the weighted
source score distribution approximates the target score distribution.  Conditional
on the fitted score function, let
\[
  F_t(q)=\Prob_t\left\{s(X,Y)\le q\right\},
\]
and
\[
  F_w(q)=\E_w\left[
  \Prob_s\left\{s(X,Y)\le q\mid X\right\}
  \right].
\]
Under covariate shift, the conditional probability inside the expectation is the
same under source and target.  For exact weights, this yields
\(F_w(q)=F_t(q)\); a direct derivation is given in \cref{app:score-cdf}.
Therefore the discrepancy
\[
  |F_t(q)-F_w(q)|
\]
is an input-distribution discrepancy for the score-relevant balancing function
\[
  f_q(x)=\Prob_s\left\{s(x,Y)\le q\mid X=x\right\}.
\]
This is the downstream link.  If the balancing-function class contains or
approximates the score-relevant functions, then covariate balance confirmation
controls the part of the conformal coverage error caused by residual input
mismatch.  If the class ignores those functions, balance can be confirmed in the
chosen moments while the conformal score distribution remains mismatched.

\bigskip
\begin{proposition}[Coverage degradation from calibration and score-CDF mismatch]
\label{prop:coverage-degradation}
Consider the plug-in threshold \(\widehat q_w^{\,\mathrm{plug}}\) and set
\(C_w^{\mathrm{plug}}\) defined above.  Work conditionally on
\((\Dtr,\Dcal)\).  If
\(\sum_{i=1}^{n_{\cali}}w(X_i^s)=0\), then
\(\widehat q_w^{\,\mathrm{plug}}=+\infty\) by convention and the conditional
coverage is one.  Otherwise, define
\[
  \widehat F_w(q)
  =
  \frac{\sum_{i=1}^{n_{\cali}}w(X_i^s)\one\{S_i\le q\}}
       {\sum_{i=1}^{n_{\cali}}w(X_i^s)}.
\]
If, on the conditioning event,
\[
  \sup_q|\widehat F_w(q)-F_w(q)|\le\gamma,
  \qquad
  \sup_q|F_t(q)-F_w(q)|\le\varepsilon,
\]
then
\[
  \Prob_t\left\{Y\in C_w^{\mathrm{plug}}(X)\mid\Dtr,\Dcal\right\}
  \ge 1-\alpha-\gamma-\varepsilon.
\]
\end{proposition}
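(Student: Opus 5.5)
Conditionally on \((\Dtr,\Dcal)\), the score function \(s\) and the threshold \(\widehat q\defeq\widehat q_w^{\,\mathrm{plug}}\) are fixed, while the target pair \((X,Y)\sim P_t\) is independent of the conditioning data. The target coverage is therefore
\[
\Prob_t\{Y\in C_w^{\mathrm{plug}}(X)\mid\Dtr,\Dcal\}=\Prob_t\{s(X,Y)\le\widehat q\}=F_t(\widehat q).
\]
The zero-weight case is immediate: then \(\widehat q=+\infty\) and the coverage is one. In the remaining case the plan is to chain three inequalities:
\[
F_t(\widehat q)\ \ge\ F_w(\widehat q)-\varepsilon\ \ge\ \widehat F_w(\widehat q)-\gamma-\varepsilon\ \ge\ 1-\alpha-\gamma-\varepsilon .
\]
The first two follow directly from the two assumed sup-norm bounds, evaluated at the single point \(q=\widehat q\).

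The third inequality, \(\widehat F_w(\widehat q)\ge 1-\alpha\), is a property of the weighted quantile. The function \(\widehat F_w\) is a step CDF that puts mass \(w(X_i^s)/\sum_j w(X_j^s)\) on each score \(S_i\). It is nondecreasing and right-continuous, and its total mass is one, so the set \(\{q:\widehat F_w(q)\ge1-\alpha\}\) is nonempty. Hence \(\widehat q\) is finite, and by right-continuity the infimum is attained, which gives \(\widehat F_w(\widehat q)\ge1-\alpha\). I will state this attainment step explicitly; it is the only place requiring any care.

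The other subtlety is that the hypotheses are sup bounds over all real \(q\), so applying them at a data-dependent point is legitimate. Because we condition on \(\Dcal\), \(\widehat q\) is deterministic on the conditioning event, and there is no issue.

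I do not expect a real obstacle. The main thing to check is that \(F_t\) and \(F_w\) are genuine CDFs of the score under the target and corrected laws. The definition of \(F_t\) makes this immediate, and the covariate-shift identity (Appendix~\ref{app:score-cdf}) is not needed for this bound, since the mismatch \(\varepsilon\) is assumed rather than derived.
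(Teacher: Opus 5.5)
Your proposal is correct and follows the paper's proof essentially step for step: identify the conditional coverage with $F_t(\widehat q_w^{\,\mathrm{plug}})$, then chain the two sup-norm bounds with $\widehat F_w(\widehat q_w^{\,\mathrm{plug}})\ge 1-\alpha$. Your right-continuity argument for attainment of the infimum fills in a step the paper covers only with ``by definition'' (finiteness of the threshold also uses that $\widehat F_w$ vanishes below the smallest score and that $1-\alpha>0$).
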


\noindent\emph{Proof sketch.}
Conditional on training and calibration data, the plug-in threshold is fixed, so
test coverage equals the target score CDF evaluated at that threshold.  By the
definition of the empirical quantile,
\(\widehat F_w(\widehat q_w^{\,\mathrm{plug}})\ge1-\alpha\).  The two uniform
approximation bounds then give the claimed lower bound.  The full proof is in Appendix~\ref{app:proof-coverage-degradation}.

\bigskip
\begin{remark}[The two error bounds require separate control]
\label{rem:coverage-premises}
The quantity \(\gamma\) controls calibration-sample error in the weighted
empirical score CDF; a weighted empirical-process bound is needed to make this
term explicit.  The quantity \(\varepsilon\) controls residual input mismatch for
the score-relevant family \(\{f_q:q\in\mathbb R\}\).  If one instead confirms only
the data-dependent function
\(f_{\widehat q_w^{\,\mathrm{plug}}}\), the calibration data and threshold must be
fixed before an independent target monitoring stream is used, and the argument
is applied conditionally on \(\Dcal\).  Alternatively, a prespecified class of
threshold-indexed functions may be controlled uniformly.  If the score-relevant
functions are estimated rather than known, their estimation error must also be
included in \(\varepsilon\).
\end{remark}

\cref{prop:coverage-degradation} gives the operational interpretation of
confirmed balance for conformal prediction.  The remaining score-CDF mismatch
appears directly as a coverage loss.  Thus the right balancing-function class for
weighted conformal calibration is not arbitrary; it should see the aspects of the
input distribution that affect the score distribution.

The downstream workflow is summarized in \cref{alg:weighted-cb}.  Confirmation is
used as a gate for the covariate correction, while conformal calibration uses the
confirmed weights in the usual weighted quantile construction.

\begin{algorithm}[H]
\caption{Weighted conformal calibration after confirming a covariate correction}
\label{alg:weighted-cb}
\begin{algorithmic}[1]
\Require Training data \(\Dtr\), calibration data \(\Dcal\), proposed correction
\(w\), target input stream, global-evidence level \(\alpha_{\mathrm e}\), target
balance level \(\delta_{\mathrm{bal}}\), source level \(\eta_{\mathrm{src}}\),
conformal miscoverage \(\alpha_{\mathrm{cp}}\)
\State Fit a predictive model or score function \(s(\cdot,\cdot)\) using \(\Dtr\)
\State Compute calibration scores \(S_i=s(X_i^s,Y_i^s)\)
\State Before target monitoring, fix the balancing-function class, including score-relevant functions if used
\State Run \cref{alg:better-source} at level \(\alpha_{\mathrm e}\) to obtain source-calibrated global relative evidence, if desired
\State Run \cref{alg:balance} at target level \(\delta_{\mathrm{bal}}\) and source level \(\eta_{\mathrm{src}}\) to confirm \(\varepsilon\)-balance
\If{\(\varepsilon\)-balance is not confirmed}
  \State \Return do not deploy the proposed correction
\EndIf
\For{new target input \(x\)}
  \State Form normalized weights \(p_i^w(x),p_\infty^w(x)\) and the
  test-point-dependent threshold \(\widehat q_w(x)\) at level \(1-\alpha_{\mathrm{cp}}\)
  \State Output \(C_w(x)=\{y:s(x,y)\le \widehat q_w(x)\}\)
\EndFor
\end{algorithmic}
\end{algorithm}

\section{Numerical Experiments}
\label{sec:experiments}

This section isolates five claims developed in the preceding theory.  The
experiments examine (i) source-reference validity and KL-predicted drift of the
global e-process, (ii) false-confirmation control and locality of the balance-confirmation
procedure, (iii) validity and directional power of acceptable-region
monitoring, (iv) the downstream effect of exact and partial covariate weighting
on split conformal prediction, and (v) the distinction between contracted
confirmation bands and expanded compatibility bands when weighted source
moments are estimated.  The reported stopping rates are finite-horizon operating
characteristics; the formal guarantees concern the probability of ever crossing
or ever falsely confirming.

\subsection{Source-calibrated global relative evidence}
\label{subsec:exp-better-source}

We use the Gaussian input model
\[
  P_s^X=N(0,I_d),
  \qquad
  P_t^X=N(\mu,I_d),
  \qquad
  d=5,\quad \|\mu\|=1.2.
\]
For \(a\in\mathbb R\), the proposed correction is
\[
  w_a(x)
  =
  \exp\{a\mu^\top x-\tfrac12a^2\|\mu\|^2\},
\]
which corresponds to the corrected input distribution \(N(a\mu,I_d)\).  Under \(X\sim N(b\mu,I_d)\), the expected log-growth rate is
\[
  \E_{X\sim N(b\mu,I_d)} \left[ \log w_a(X) \right]
  =
  ab\|\mu\|^2-\frac12a^2\|\mu\|^2.
\]
We monitor \(M_n=\prod_{i=1}^n w_a(X_i)\) up to \(T=300\) with
\(\alpha=0.05\), using 1000 Monte Carlo replications.  These exponential weights
are normalized analytically, so the conservative finite-source normalizer of
\cref{subsec:conservative-normalizer} is not used.  The Gaussian drift formula
used in the theory column below is derived in \cref{app:gaussian-partial}.

\begin{table}[H]
\centering
\caption{Source-calibrated global e-process performance over 1000 Monte Carlo runs.  The horizon is \(T=300\), \(\alpha=0.05\), \(d=5\), and \(\|\mu\|=1.2\).  Median stop is taken over the runs that cross \(1/\alpha\); ``--'' marks scenarios with no crossings.}
\label{tab:better-source-results}
\begin{tabular}{p{0.34\linewidth}rrrr}
\toprule
Scenario & Theory drift & Empirical drift & Crossing rate & Median stop \\
\midrule
source-reference law, exact weight & -0.720 & -0.721 & 0.024 & 4 \\
target, exact weight & 0.720 & 0.724 & 1.000 & 4 \\
target, partial weight & 0.540 & 0.540 & 1.000 & 6 \\
target, wrong-direction weight & -0.900 & -0.901 & 0.000 & -- \\
\bottomrule
\end{tabular}
\end{table}

\Cref{tab:better-source-results} agrees with the drift calculation.  Under the
source-reference law, the empirical probability of crossing by \(T=300\) is
\(0.024\), below the anytime-valid bound \(\alpha=0.05\).  The median stopping
time of \(4\) in this row is conditional on the rare runs that cross and should
not be interpreted as power under the null.  Under the target distribution, the
exact correction has empirical drift \(0.724\), close to the theoretical value
\(0.720\).  The partial correction \(a=0.5\) also has positive drift and crosses
rapidly despite leaving residual mismatch.  Thus favorable global evidence
means that the correction improves on the source in KL risk; it does not imply
\(\varepsilon\)-balance.  Conversely, the wrong-direction correction has
empirical drift \(-0.901\) and never crosses within the horizon.  This
non-crossing is not a level-controlled rejection, although sustained negative
log-growth is a useful diagnostic of a harmful correction direction.  The next
experiment shows that this diagnostic and the local balance-confirmation procedure provide
nonredundant but non-exhaustive information.

\subsection{Covariate balance confirmation}
\label{subsec:exp-balance}

We reuse the Gaussian source and target distributions from
\cref{subsec:exp-better-source} and take the coordinate functions
\(f_j(x)=x_j\) as balancing functions.  Thus \(\mu_{t,j}\) and \(\mu_{w,j}\)
are coordinate means, and a correction of strength \(a\) has discrepancy
\(|1-a|\,|\mu_j|\) on coordinate \(j\).  The first two rows below use \(d=5\)
and \(\|\mu\|=1.2\); for \(a=0.5\), at least one coordinate discrepancy exceeds
the tolerance \(0.20\).  The last two rows use \(d=20\) and a correction that
matches the target mean on only the first five coordinates, while each of the
remaining coordinates has discrepancy greater than \(0.30\).  The exact mean
vectors are provided in the accompanying code.

Because coordinate functions are unbounded under Gaussian inputs, we use a
sub-Gaussian rather than bounded-Hoeffding confidence sequence.  If each
coordinate is \(\sigma^2\)-sub-Gaussian, the union-over-time sequence
\[
  \widehat\mu_{n,j}
  \pm
  \sqrt{
  \frac{2\sigma^2\log\{m\pi^2n^2/(3\delta)\}}{n}
  }
\]
has simultaneous coverage at least \(1-\delta\) over all \(n\ge1\) and
\(j\le m\), by the same weighted union bound as
\eqref{eq:hoeffding-radius}.  The standard Gaussian coordinates have
\(\sigma^2=1\).  Variance-adaptive confidence sequences could be substituted
under their corresponding assumptions.  We set \(\delta=0.05\), monitor to
\(T=1200\), and use 500 Monte Carlo replications.

\begin{table}[H]
\centering
\caption{Covariate balance confirmation with linear Gaussian balancing functions over 500 Monte Carlo runs.  The horizon is \(T=1200\) and the confidence-sequence level is \(\delta=0.05\).}
\label{tab:balance-results}
\footnotesize
\begin{tabular}{p{0.34\linewidth}rrrr}
\toprule
Scenario & Balancing functions & \(\varepsilon\) & Confirm rate & Median stop \\
\midrule
exact correction & 5 & 0.250 & 0.908 & 912 \\
partial correction, residual discrepancy above tolerance & 5 & 0.200 & 0.000 & -- \\
weak balance, first five balancing functions & 5 & 0.300 & 1.000 & 632 \\
weak balance, all twenty balancing functions & 20 & 0.300 & 0.000 & -- \\
\bottomrule
\end{tabular}
\end{table}

\Cref{tab:balance-results} illustrates both the value and the locality of balance confirmation.  The exact correction is confirmed in \(90.8\%\) of runs by
\(T=1200\); the remaining runs simply have not accumulated enough precision by
the finite horizon.  For the partial correction \(a=0.5\), confirmation never
occurs at tolerance \(\varepsilon=0.20\), even though
\cref{tab:better-source-results} shows favorable global evidence.  Global
improvement is therefore distinct from closeness to the target.  The final two
rows isolate locality.  A correction that balances only the first five
coordinates is confirmed when those five coordinates constitute the entire
balancing-function class, but it is never confirmed when all twenty coordinates
are monitored.  Hence the resulting certificate can only speak to the discrepancies
encoded by the prespecified class.

The global e-process also misses this particular omission.  Write the
weak-balance correction as \(P_w^X=N(\theta,I_{20})\), where \(\theta\) agrees
with the target mean \(\mu\) on the first five coordinates and is zero on the
remaining fifteen.  Then \cref{prop:kl-drift} gives
\[
  \E_t \left[ \log w(X) \right]
  =
  \frac12\|\mu\|^2-\frac12\|\mu-\theta\|^2
  =
  \frac12\sum_{j=1}^5\mu_j^2
  >0.
\]
Thus the incomplete correction still improves on the source in KL risk, and
the global process crosses almost surely under the conditions of
\cref{cor:positive-drift}.  The example exposes an overlapping blind spot:
balance confirmation cannot detect discrepancies outside the chosen class, and
the global monitor can favor an incomplete correction whenever it improves on
the source.  Detecting the omitted coordinates requires a richer
balancing-function class or other prespecified diagnostics.  Finally,
\(\Delta_{\mathcal F}(w)\le\varepsilon\) is a population property, whereas the
probability of confirmation by \(T\) and the median stopping time are
finite-horizon operating characteristics.

\subsection{Monitoring an acceptable exponential-tilt region}
\label{subsec:exp-acceptable-region}

We next illustrate the composite acceptable-region test of
\cref{subsec:exp-family}.  Let \(u\in\mathbb R^d\) satisfy \(\|u\|=1\), take
\[
  P_s^X=N(0,I_d),
  \qquad
  P_\theta^X=N(\theta u,I_d),
\]
and use the scalar feature \(\phi(x)=u^\top x\).  Then
\(\psi_s(\theta)=\theta^2/2\).  We regard
\[
  \Theta_0=[-\kappa,\kappa],
  \qquad \kappa=0.6,
\]
as the acceptable range of mean shifts.  For a scalar test direction
\(\lambda\),
\[
  \psi_{\Theta_0}(\lambda)
  =
  \kappa|\lambda|+\frac{\lambda^2}{2},
\]
so the directional e-process is
\[
  M_n(\lambda)
  =
  \exp\left\{
    \lambda\sum_{i=1}^n u^\top X_i
    -n\left(\kappa|\lambda|+\frac{\lambda^2}{2}\right)
  \right\}.
\]
To monitor departures in both signs, we use the prespecified mixture
\[
  \overline M_n
  =
  \frac12 M_n(0.4)+\frac12 M_n(-0.4).
\]
The mixture remains an e-process under every \(\theta\in\Theta_0\).  Under
\(P_\theta^X\), the expected log increment of component \(\lambda\) is
\[
  g_\theta(\lambda)
  =
  \lambda\theta-\kappa|\lambda|-\frac{\lambda^2}{2}.
\]
We monitor to \(T=300\) at level \(\alpha=0.05\) over 1000 Monte Carlo
replications.  The table reports the larger component drift
\(\max\{g_\theta(0.4),g_\theta(-0.4)\}\), not the drift of the mixture itself.
Median stopping time is conditional on crossing \(1/\alpha\).

\begin{table}[H]
\centering
\caption{Anytime-valid monitoring of the acceptable Gaussian tilt region
\(\Theta_0=[-0.6,0.6]\) over 1000 Monte Carlo runs.  The horizon is
\(T=300\), \(\alpha=0.05\), and the e-process mixes the two directions
\(\lambda=\pm0.4\).}
\label{tab:acceptable-region-results}
\begin{tabular}{p{0.36\linewidth}rrrr}
\toprule
Scenario & \(\theta\) & Best directional drift & Crossing rate & Median stop \\
\midrule
center of acceptable region & 0.0 & -0.320 & 0.000 & -- \\
boundary of acceptable region & 0.6 & -0.080 & 0.021 & 36 \\
slightly beyond acceptable region & 0.7 & -0.040 & 0.134 & 58 \\
clear positive departure & 1.0 & 0.080 & 1.000 & 37 \\
clear negative departure & -1.0 & 0.080 & 1.000 & 40 \\
\bottomrule
\end{tabular}
\end{table}

\Cref{tab:acceptable-region-results} illustrates validity and directional
power.  At the center and boundary of the acceptable region, the finite-horizon
crossing rates remain below \(0.05\), as required by the time-uniform null
guarantee.  Clear departures in either direction are detected rapidly.  The
case \(\theta=0.7\) lies outside the acceptable region, but both monitored
components have negative expected log-growth, so the crossing rate by \(T=300\)
is only \(0.134\).  This is a power limitation, not a validity failure.  A
crossing is valid evidence against the acceptable family, whereas absence of a
crossing does not establish membership in that family.  The prespecified
direction grid or mixing distribution should therefore target departures that
would materially affect the downstream decision.

\subsection{Downstream conformal effect}
\label{subsec:exp-conformal}

The next experiment illustrates the downstream consequences of exact and
partial covariate weighting in split conformal prediction.  It is a separate
deployment example rather than a formal implication of the preceding balance
experiment.  We use
\(d=5\), source inputs \(N(0,I_d)\), target inputs \(N(\mu,I_d)\) with
\(\mu=(0.8,0,\ldots,0)\), and responses
\[
  Y=0.5X_1+0.8X_1^2+0.5X_2+\epsilon,
  \qquad
  \epsilon\sim N(0,0.6^2).
\]
The fitted predictive model is deliberately misspecified: a ridge linear model
is trained on source data and the conformal score is the absolute residual.  Each
replication uses 600 training points, 800 calibration points, and 2000 target
test points.  Exact weighted calibration uses the test-point-dependent quantile
of \cref{subsec:weighted-split}; partial weighted calibration uses the Gaussian
tilt with strength \(a=0.5\); and weighted fitting minimizes the ridge objective
with the exact covariate weights before exact weighted calibration.  We report
averages over 300 replications at nominal \(90\%\) coverage.

\begin{table}[H]
\centering
\caption{Downstream split conformal performance over 300 Monte Carlo runs.  The target level is \(90\%\) coverage.  The regression model is deliberately misspecified.}
\label{tab:conformal-results}
\begin{tabular}{p{0.42\linewidth}rrr}
\toprule
Method & Coverage & Average width & Infinite-threshold rate \\
\midrule
unweighted & 0.814 & 3.655 & 0.000 \\
exact weighted calibration & 0.902 & 5.811 & 0.000 \\
partial weighted calibration & 0.838 & 4.061 & 0.000 \\
weighted fitting + exact calibration & 0.903 & 3.771 & 0.000 \\
\bottomrule
\end{tabular}
\end{table}

\Cref{tab:conformal-results} shows that unweighted conformal calibration
under-covers on the target distribution.  Exact covariate weighting restores
coverage to approximately the nominal level, although exact weighted calibration
alone produces wider intervals.  The partial correction increases average
width from \(3.655\) to \(4.061\), improves coverage only modestly from \(0.814\)
to \(0.838\), and still under-covers.  This behavior is consistent with the
general warning that an incomplete correction need not repair downstream
validity, but the experiment is illustrative rather than a theorem linking the
chosen balance tolerance to coverage.  Weighted fitting followed by exact
weighted calibration retains near-nominal coverage while reducing average width
to \(3.771\) in this misspecified example.  Here weighted calibration supplies
the covariate-shift validity repair, while weighted fitting improves efficiency.

\subsection{Finite source moments: confirmation versus compatibility band}
\label{subsec:exp-finite-source}

The preceding balance experiment used population weighted moments.  We now
estimate them from a finite source sample, which requires the band adjustment of
\cref{subsec:source-moment-uncertainty}.  We take
\(P_s^X=N(0,I_5)\), \(P_t^X=N(\mu,I_5)\) with \(\mu=(1,0,0,0,0)\), and a Gaussian
tilt correction \(P_w^X=N(\theta,I_5)\) with \(\theta=(0.65,0,0,0,0)\), so the
true discrepancy on the first coordinate is
\(|\mu_1-\theta_1|=0.35\), strictly above the tolerance \(\varepsilon=0.25\).  The
correction is therefore \emph{out of tolerance}, and the correct decision is not to
confirm.  Weighted source moments \(\mu_{w,j}\) are estimated by self-normalized
importance weighting from \(n_s=200\) source points.  We form a simultaneous
normal-approximation interval \([\ell_{w,j},u_{w,j}]\) using self-normalized
standard errors and a Bonferroni normal quantile at nominal level \(1-\eta\).
The target side uses the sub-Gaussian confidence sequence above at level
\(1-\delta\).  We set \(\eta=0.10\), \(\delta=0.05\), \(T=1500\), and use 4000
replications.  The comparison is between the compatibility band
\([\ell_{w,j}-\varepsilon,u_{w,j}+\varepsilon]\) and the confirmation band
\([u_{w,j}-\varepsilon,\ell_{w,j}+\varepsilon]\).

\begin{table}[H]
\centering
\caption{Finite-source covariate balance confirmation over 4000 Monte Carlo runs.  The correction is
out of tolerance by construction (true discrepancy \(0.35>\varepsilon=0.25\)), so any
confirmation is false.  Source moments are estimated from \(n_s=200\) points
using a normal-approximation source interval with nominal \(\eta=0.10\);
the target confidence sequence uses \(\delta=0.05\).  The formal
\(\delta+\eta\) guarantee applies when the source interval has the stated
coverage.}
\label{tab:finite-source-results}
\begin{tabular}{@{}p{0.52\linewidth}
  >{\centering\arraybackslash}p{0.24\linewidth}
  >{\centering\arraybackslash}p{0.16\linewidth}@{}}
\toprule
Decision band & Stopping rate under imbalance & Empty-band rate \\
\midrule
compatibility band \([\ell_{w,j}-\varepsilon,\,u_{w,j}+\varepsilon]\) & 0.376 & -- \\
confirmation band \([u_{w,j}-\varepsilon,\,\ell_{w,j}+\varepsilon]\) & 0.000 & 0.297 \\
\bottomrule
\end{tabular}
\end{table}

\Cref{tab:finite-source-results} makes the geometric distinction concrete.
If the expanded compatibility band were incorrectly used for confirmation, it
would falsely confirm the out-of-tolerance correction in \(37.6\%\) of runs.
Because expansion absorbs source uncertainty in the wrong direction, a target
confidence sequence can fit inside the expanded band even when the true
discrepancy exceeds the tolerance.  The contracted confirmation band has no
false confirmations in these simulations and is empty in \(29.7\%\) of runs,
thereby reporting that the desired tolerance cannot be confirmed from the
available source sample.  The formal \(\delta+\eta\) result requires the source
interval to have its stated simultaneous coverage; this experiment illustrates
the band geometry and does not itself validate the normal approximation.
Overall, the experiments support the distinct interpretations of the global
monitor, the local balance-confirmation procedure, and the acceptable-region test, while
also displaying their finite-horizon and specification-dependent limitations.

\section{Conclusion}
\label{sec:conclusion}

This paper separates the construction of a covariate correction from the statistical confirmation of its adequacy. Its primary result is an anytime-valid procedure for confirming $\varepsilon$-covariate balance for a prespecified correction, balancing-function class, and collection of tolerances. The procedure may be monitored continuously and stopped once every target confidence sequence is contained in its corresponding tolerance band. If the correction is out of tolerance for at least one balancing function, the probability of ever falsely confirming balance is at most $\delta$. When the weighted source moments are estimated, contracted confirmation bands account for source-side uncertainty and yield an analogous false-confirmation bound of $\delta+\eta$. The resulting certificate is absolute with respect to the chosen balance criteria, although necessarily local to those criteria.

The source-calibrated global e-process provides different but non-exhaustive information. Its KL-drift identity describes whether the proposed correction is globally favored over the uncorrected source distribution along the observed target stream, but crossing its evidence threshold does not confirm closeness to the target distribution. In particular, an incomplete correction may satisfy a restricted balancing-function class and simultaneously have positive e-process drift, so the blind spots of the two assessments can overlap. The global monitor is most informative for revealing globally harmful correction directions, whereas the balance procedure provides the downstream-adequacy statement associated with the prespecified functions and tolerances. The acceptable-region exponential-tilt construction extends the same anytime-valid perspective to testing whether the target stream lies beyond a prespecified family of plausible corrections. Once a correction has been balance-confirmed for the relevant criteria, it may also be deployed in weighted conformal prediction.

The guarantees depend on prespecifying the correction or acceptable region, monitoring protocol, balancing functions, tolerances, and test directions. A finite balancing-function class cannot establish equality of full input distributions, a fixed exponential-tilt direction need not detect every alternative outside the acceptable region, and finite source samples or highly variable weights can make confirmation slow or impossible. These limitations motivate methods for adaptively enriching balancing functions and test directions while preserving anytime validity, sharper treatment of source-side and weight-estimation uncertainty, and confirmation criteria tailored directly to downstream prediction, selection, and decision-making under covariate shift.

\bibliographystyle{apalike}
\bibliography{sjc}

\begin{thebibliography}{}

\bibitem[Angelopoulos and Bates, 2023]{AngelopoulosAN2023ftml}
Angelopoulos, A.~N. and Bates, S. (2023).
\newblock Conformal prediction: A gentle introduction.
\newblock {\em Foundations and Trends$^\circledR$ in Machine Learning},
  16(4):494--591.

\bibitem[Athey et~al., 2018]{AtheyS2018jrsssb}
Athey, S., Imbens, G.~W., and Wager, S. (2018).
\newblock Approximate residual balancing: Debiased inference of average
  treatment effects in high dimensions.
\newblock {\em Journal of the Royal Statistical Society Series B},
  80(4):597--623.

\bibitem[Barber et~al., 2023]{BarberRF2023aos}
Barber, R.~F., Cand{\`e}s, E.~J., Ramdas, A., and Tibshirani, R.~J. (2023).
\newblock Conformal prediction beyond exchangeability.
\newblock {\em The Annals of Statistics}, 51(2):816--845.

\bibitem[Ben-Michael et~al., 2021]{Ben-MichaelE2021arxiv}
Ben-Michael, E., Feller, A., Hirshberg, D.~A., and Zubizarreta, J.~R. (2021).
\newblock The balancing act in causal inference.
\newblock arXiv:2110.14831.

\bibitem[Chan et~al., 2016]{ChanKCG2016jrsssb}
Chan, K. C.~G., Yam, S. C.~P., and Zhang, Z. (2016).
\newblock Globally efficient non-parametric inference of average treatment
  effects by empirical balancing calibration weighting.
\newblock {\em Journal of the Royal Statistical Society Series B},
  78(3):673--700.

\bibitem[Choi, 2026]{Choi2026testing}
Choi, S. (2026).
\newblock Anytime-valid confirmation of label-shift corrections.
\newblock In {\em ICML 2026 Workshop on Hypothesis Testing}.

\bibitem[Cortes et~al., 2010]{CortesC2010neurips}
Cortes, C., Mansour, Y., and Mohri, M. (2010).
\newblock Learning bounds for importance weighting.
\newblock In {\em Advances in Neural Information Processing Systems (NeurIPS)}.

\bibitem[Gretton et~al., 2012]{GrettonA2012jmlr}
Gretton, A., Borgwardt, K.~M., Rasch, M.~J., and Sch{\"o}lkopf, B. (2012).
\newblock A kernel two-sample test.
\newblock {\em Journal of Machine Learning Research}, 13:723--773.

\bibitem[Gr{\"u}nwald et~al., 2024]{GrunwaldP2024jrsssb}
Gr{\"u}nwald, P., de~Heide, R., and Koolen, W. (2024).
\newblock Safe testing.
\newblock {\em Journal of the Royal Statistical Society Series B},
  86(5):1091--1128.

\bibitem[Hainmueller, 2012]{HainmuellerJ2012pa}
Hainmueller, J. (2012).
\newblock Entropy balancing for causal effects: A multivariate reweighting
  method to produce balanced samples in observational studies.
\newblock {\em Political Analysis}, 20(1):25--46.

\bibitem[Howard et~al., 2021]{HowardSR2021aos}
Howard, S.~R., Ramdas, A., McAuliffe, J., and Sekhon, J. (2021).
\newblock Time-uniform, nonparametric, nonasymptotic confidence sequences.
\newblock {\em The Annals of Statistics}, 49(2):1055--1080.

\bibitem[Imai and Ratkovic, 2014]{ImaiK2014jrsssb}
Imai, K. and Ratkovic, M. (2014).
\newblock Covariate balancing propensity score.
\newblock {\em Journal of the Royal Statistical Society Series B},
  76(1):243--263.

\bibitem[Kanamori et~al., 2009]{KanamoriT2009jmlr}
Kanamori, T., Hido, S., and Sugiyama, M. (2009).
\newblock A least-squares approach to direct importance estimation.
\newblock {\em Journal of Machine Learning Research}, 10:1391--1445.

\bibitem[Manole and Ramdas, 2023]{ManoleT2023ieeetit}
Manole, T. and Ramdas, A. (2023).
\newblock Martingale methods for sequential estimation of convex functionals
  and divergences.
\newblock {\em IEEE Transactions on Information Theory}, 69(7):4641--4658.

\bibitem[M{\"u}ller, 1997]{MullerA1997aap}
M{\"u}ller, A. (1997).
\newblock Integral probability metrics and their generating classes of
  functions.
\newblock {\em Advances in Applied Probability}, 29(2):429--443.

\bibitem[Papadopoulos et~al., 2002]{PapadopoulosH2002ecml}
Papadopoulos, H., Proedrou, K., Vovk, V., and Gammerman, A. (2002).
\newblock Inductive confidence machines for regression.
\newblock In {\em Proceedings of the European Conference on Machine Learning
  (ECML)}.

\bibitem[Ramdas et~al., 2023]{RamdasA2023ss}
Ramdas, A., Gr{\"u}nwald, P., Vovk, V., and Shafer, G. (2023).
\newblock Game-theoretic statistics and safe anytime-valid inference.
\newblock {\em Statistical Science}, 38(4):576--601.

\bibitem[Schuirmann, 1987]{SchuirmannDJ1987jpb}
Schuirmann, D.~J. (1987).
\newblock A comparison of the two one-sided tests procedure and the power
  approach for assessing the equivalence of average bioavailability.
\newblock {\em Journal of Pharmacokinetics and Biopharmaceutics}, 15:657--680.

\bibitem[Shafer and Vovk, 2008]{ShaferG2008jmlr}
Shafer, G. and Vovk, V. (2008).
\newblock A tutorial on conformal prediction.
\newblock {\em Journal of Machine Learning Research}, 9:371--421.

\bibitem[Shekhar and Ramdas, 2024]{ShekharS2024ieeetit}
Shekhar, S. and Ramdas, A. (2024).
\newblock Nonparametric two-sample testing by betting.
\newblock {\em IEEE Transactions on Information Theory}, 70(2):1178--1203.

\bibitem[Shimodaira, 2000]{ShimodairaH2000jspi}
Shimodaira, H. (2000).
\newblock Improving predictive inference under covariate shift by weighting the
  log-likelihood function.
\newblock {\em Journal of Statistical Planning and Inference}, 90:227--244.

\bibitem[Sriperumbudur et~al., 2012]{SriperumbudurBK2012ejs}
Sriperumbudur, B.~K., Fukumizu, K., Gretton, A., Sch{\"o}lkopf, B., and
  Lanckriet, G. R.~G. (2012).
\newblock On the empirical estimation of integral probability metrics.
\newblock {\em Electronic Journal of Statistics}, 6:1550--1599.

\bibitem[Sugiyama et~al., 2007]{SugiyamaM2007neurips}
Sugiyama, M., Nakajima, S., Kashima, H., von B{\"u}nau, P., and Kawanabe, M.
  (2007).
\newblock Direct importance estimation with model selection and its application
  to covariate shift adaptation.
\newblock In {\em Advances in Neural Information Processing Systems (NeurIPS)}.

\bibitem[Sugiyama et~al., 2012]{SugiyamaM2012book}
Sugiyama, M., Suzuki, T., and Kanamori, T. (2012).
\newblock {\em Density Ratio Estimation in Machine Learning}.
\newblock Cambridge University Press.

\bibitem[Tibshirani et~al., 2019]{TibshiraniR2019neurips}
Tibshirani, R.~J., Barber, R.~F., Cand{\`e}s, E.~J., and Ramdas, A. (2019).
\newblock Conformal prediction under covariate shift.
\newblock In {\em Advances in Neural Information Processing Systems (NeurIPS)}.

\bibitem[Ville, 1939]{VilleJ1939phd}
Ville, J. (1939).
\newblock {\em {\'E}tude Critique de la Notion de Collectif}.
\newblock PhD thesis, Universit{\'e} de Paris.

\bibitem[Vovk et~al., 2005]{VovkV2005book}
Vovk, V., Gammerman, A., and Shafer, G. (2005).
\newblock {\em Algorithmic Learning in a Random World}.
\newblock Springer.

\bibitem[Wald, 1945]{WaldA1945aoms}
Wald, A. (1945).
\newblock Sequential tests of statistical hypotheses.
\newblock {\em The Annals of Mathematical Statistics}, 16(2):117--186.

\bibitem[Waudby-Smith and Ramdas, 2024]{Waudby-SmithI2024jrsssb}
Waudby-Smith, I. and Ramdas, A. (2024).
\newblock Estimating means of bounded random variables by betting.
\newblock {\em Journal of the Royal Statistical Society Series B}, 86:1--27.

\bibitem[Wellek, 2010]{WellekS2010book}
Wellek, S. (2010).
\newblock {\em Testing Statistical Hypotheses of Equivalence and
  Noninferiorit}.
\newblock Chapman \& Hall/CRC, 2nd edition.

\bibitem[Zubizarreta, 2015]{ZubizarretaJR2015jasa}
Zubizarreta, J.~R. (2015).
\newblock Stable weights that balance covariates for estimation with incomplete
  outcome data.
\newblock {\em Journal of the American Statistical Association},
  110(511):910--922.

\end{thebibliography}

\clearpage
\appendix

\section{Proofs}
\label{app:proofs}

\subsection{Proof of Theorem~\ref{thm:input-eprocess}}
\label{app:proof-input-eprocess}

\begin{proof}
Fix a realization of \(\cG_0\).  By \cref{ass:prespecified}, \(w\) is then fixed
and satisfies \(\E_s \left[ w(X) \right]=1\).  Under \(H_0^{\mathrm{src}}\), conditionally on
the past,
\[
  \E[w(X_i)\mid\cG_{i-1}]=\E_s \left[ w(X) \right]=1.
\]
Therefore \(M_n\) is a nonnegative martingale under the corresponding conditional
law.  Ville's inequality gives the conditional time-uniform bound, and averaging
over \(\cG_0\) gives \eqref{eq:ville-input}.
\end{proof}

\subsection{Proof of Proposition~\ref{prop:kl-drift}}
\label{app:proof-kl-drift}

\begin{proof}
Since \(dP_w^X/dP_s^X=w\) and \(Q^X\ll P_w^X\ll P_s^X\), the chain rule for
Radon--Nikodym derivatives gives
\[
\log\frac{dQ^X}{dP_s^X}
=
\log\frac{dQ^X}{dP_w^X}
+
\log\frac{dP_w^X}{dP_s^X}
=
\log\frac{dQ^X}{dP_w^X}
+
\log w
\qquad Q^X\text{-a.s.}
\]
By hypothesis,
\[
  \KL(Q^X\|P_s^X)
  =
  \E_Q \left[ \log\frac{dQ^X}{dP_s^X} \right]
  <\infty,
  \qquad
  \E_Q \left[ |\log w(X)| \right]<\infty.
\]
It follows that
\[
  \E_Q \left[ \log\frac{dQ^X}{dP_w^X} \right]
  =
  \KL(Q^X\|P_w^X)
  <\infty,
\]
so each term above is integrable.  Taking expectations under \(Q^X\) and
rearranging,
\[
\E_Q \left[ \log w(X) \right]
=
\E_Q \left[ \log\frac{dQ^X}{dP_s^X} \right]
-
\E_Q \left[ \log\frac{dQ^X}{dP_w^X} \right]
=
\KL \left(Q^X\|P_s^X \right)-\KL \left(Q^X\|P_w^X \right).
\]
\end{proof}

\subsection{Proof of Corollary~\ref{cor:positive-drift}}
\label{app:proof-positive-drift}

\begin{proof}
Since \(\E_Q \left[ |\log w(X)| \right]<\infty\), the strong law of large numbers gives
\[
  \frac{1}{n}\log M_n
  =
  \frac{1}{n}\sum_{i=1}^n \log w(X_i)
  \to
  \E_Q \left[ \log w(X) \right]
  \quad\text{almost surely,}
\]
and by \cref{prop:kl-drift} the limit equals
\(\KL(Q^X\|P_s^X)-\KL(Q^X\|P_w^X)\), which is strictly positive by hypothesis.
Hence \(\log M_n\to+\infty\) almost surely, so \(M_n\ge 1/\alpha\) for all
sufficiently large \(n\) and \(\tau_{\conf}<\infty\) almost surely.
\end{proof}

\subsection{Proof of Theorem~\ref{thm:balance-confirmation}}
\label{app:proof-balance-confirmation}

\begin{proof}
On the simultaneous coverage event in \eqref{eq:simultaneous-cs}, every confidence sequence contains its
true mean for all monitoring indices.  If
\(|\mu_{t,j}-\mu_{w,j}|>\varepsilon_j\) for some \(j\), then no interval
containing \(\mu_{t,j}\) can be a subset of
\([\mu_{w,j}-\varepsilon_j,\mu_{w,j}+\varepsilon_j]\).  Hence
\(\tau_{\mathrm{bal}}=\infty\) on the simultaneous coverage event.  The false-confirmation probability is at most the failure probability \(\delta\).
\end{proof}

\subsection{Proof of Proposition~\ref{prop:conservative-normalizer}}
\label{app:proof-conservative-normalizer}

\begin{proof}
Because \(U_Z\) and \(Z\) are \(\cG_0\)-measurable, fix a realization of
\(\cG_0\) for which \(Z\le U_Z\).  Under the source-reference calibration null,
\(X_n\mid\cG_{n-1}\sim P_s^X\), and therefore
\[
  \E\left[
    \left.
    \frac{\widetilde w(X_n)}{U_Z}
    \right|\cG_{n-1}
  \right]
  =
  \frac{Z}{U_Z}
  \le1.
\]
Thus, for every such realization of \(\cG_0\), \((M_n)\) is a nonnegative
supermartingale under the conditional law, and Ville's inequality gives crossing
probability at most \(\alpha\).  Averaging over \(\cG_0\) and adding the failure
probability \(\Prob\left\{Z>U_Z\right\}\le\eta\) yields the stated \(\alpha+\eta\) bound.
\end{proof}

\subsection{Proof of Corollary~\ref{cor:finite-source-balance}}
\label{app:proof-finite-source-balance}

\begin{proof}
Let \(E_s\) be the event that all source intervals cover and let \(E_t\) be the
time-uniform target coverage event in \eqref{eq:simultaneous-cs}.  On
\(E_s\cap E_t\), for every \(j\) and \(n\),
\[
  B_j^{\mathrm{conf}}
  =
  \bigcap_{\nu\in[\ell_{w,j},u_{w,j}]}
  [\nu-\varepsilon_j,\nu+\varepsilon_j]
  \subseteq
  [\mu_{w,j}-\varepsilon_j,\mu_{w,j}+\varepsilon_j].
\]
If \(|\mu_{t,j}-\mu_{w,j}|>\varepsilon_j\) for some \(j\), then no interval
containing \(\mu_{t,j}\) can be contained in \(B_j^{\mathrm{conf}}\).  Hence
\(\tau_{\mathrm{bal}}^{\mathrm{fs}}=\infty\) on \(E_s\cap E_t\).  Therefore
\[
  \Prob\left\{\tau_{\mathrm{bal}}^{\mathrm{fs}}<\infty\right\}
  \le
  \Prob\left\{E_s^c\right\}+\Prob\left\{E_t^c\right\}
  \le\eta+\delta.
\]
\end{proof}

\subsection{Proof of Proposition~\ref{prop:coverage-degradation}}
\label{app:proof-coverage-degradation}

\begin{proof}
If the total calibration weight is zero, the stated convention gives
\(\widehat q_w^{\,\mathrm{plug}}=+\infty\), so the result is immediate.
Otherwise, \(\widehat q_w^{\,\mathrm{plug}}\) is fixed given
\((\Dtr,\Dcal)\) and does not depend on the new test input \(X\).  Hence
\[
\Prob_t\left\{Y\in C_w^{\mathrm{plug}}(X)\mid\Dtr,\Dcal\right\}
=
F_t(\widehat q_w^{\,\mathrm{plug}}).
\]
By definition of the empirical weighted quantile,
\[
  \widehat F_w(\widehat q_w^{\,\mathrm{plug}})\ge1-\alpha.
\]
The assumed uniform bounds therefore imply
\begin{align*}
  F_t(\widehat q_w^{\,\mathrm{plug}})
  &\ge
  F_w(\widehat q_w^{\,\mathrm{plug}})-\varepsilon \\
  &\ge
  \widehat F_w(\widehat q_w^{\,\mathrm{plug}})-\gamma-\varepsilon \\
  &\ge
  1-\alpha-\gamma-\varepsilon.
\end{align*}
\end{proof}

\section{Additional Derivations}
\label{app:derivations}

\subsection{Gaussian drift for partial correction}
\label{app:gaussian-partial}

Let \(P_s^X=N(0,I_d)\), \(P_t^X=N(\mu,I_d)\), and propose
\[
  P_w^X=N(a\mu,I_d).
\]
Then
\[
  \KL(P_t^X\|P_s^X)=\frac12\|\mu\|^2,
\]
and
\[
  \KL(P_t^X\|P_w^X)=\frac12\|(1-a)\mu\|^2.
\]
Therefore
\[
  \E_t \left[ \log w(X) \right]
  =
  \frac12\left(1-(1-a)^2\right)\|\mu\|^2.
\]
This is positive for \(0<a<2\).  Hence a partial correction can receive positive global relative evidence even
if it is not exact.  Covariate balance confirmation is needed to determine
whether the residual mismatch is tolerable.

\subsection{Weighted conformal score CDF under covariate shift}
\label{app:score-cdf}

Let \(S=s(X,Y)\).  Under covariate shift,
\[
  P_t(Y\mid X)=P_s(Y\mid X).
\]
For any threshold \(q\),
\[
  F_t(q)
  =
  \E_t\left[\Prob_s\left\{s(X,Y)\le q\mid X\right\}\right].
\]
If \(w=w^\star\), then
\[
  F_t(q)
  =
  \frac{\E_s\left[w(X)\Prob_s\left\{s(X,Y)\le q\mid X\right\}\right]}
       {\E_s \left[ w(X) \right]}
  =
  F_w(q).
\]
Thus exact covariate weighting repairs the score distribution used by conformal
calibration.

\end{document}